\theoremstyle{plain}
\newtheorem{prop}{Proposition}[section]
\theoremstyle{definition}
\newtheorem{rem}{Remark}[section]
\newtheorem{lem}{Lemma}[section]
\newtheorem{defi}{Definition}[section]
\newsavebox{\fmbox}
\newcommand{\UneFigure}[8]{
\refstepcounter{figure}
\addcontentsline{lof}{figure}{\numberline{\thefigure}{\ignorespaces #5}}
\vspace*{#7cm}
\begin{center}
\begin{minipage}{#1cm}
\centerline{\includegraphics[width=#2cm,angle=#3]{#4}}
\vspace*{#8cm}
\begin{center}
\upshape{Figure \normalsize{\thefigure}:} #5
\end{center}
\label{#6}
\end{minipage}
\end{center}
}
\newcommand{\UnTableau}[5]{
\refstepcounter{table}
\addcontentsline{lof}{table}{\numberline{\thetable}{\ignorespaces #2}}
\vspace*{#4cm}
\begin{center}
{#1}
\vspace*{#5cm}
\begin{center}
\upshape{Table \normalsize{\thetable}:} #2
\end{center}
\label{#3}
\end{center}
}
\newcommand{\DeuxFiguresACote}[9]{
\refstepcounter{figure}
\addcontentsline{lof}{figure}{\numberline{\thefigure}{\ignorespaces #6}}
\begin{center}
\vspace*{#8cm}
\hspace*{-1cm} \begin{minipage}{#1\linewidth}
      \centering \includegraphics[width=#2cm,angle=#3]{#4}
   \end{minipage}\hfill
   \begin{minipage}{#1\linewidth}
      \centering \includegraphics[width=#2cm,angle=#3]{#5}
   \end{minipage} \hfill
\vspace*{#9cm}
\begin{center}
\upshape{Figure \normalsize{\thefigure}:} #6
\end{center}
\label{#7}
\end{center}
}
\newcommand{\Ccurs}{\begin{cal}C\end{cal}}
\newcommand{\Ecurs}{\begin{cal}E\end{cal}}
\newcommand{\Fcurs}{\begin{cal}F\end{cal}}
\newcommand{\Ocurs}{\begin{cal}O\end{cal}}
\newcommand{\Tcurs}{\begin{cal}T\end{cal}}
\newcommand{\E}{\mathbb{E}}
\newcommand{\F}{\mathbb{F}}
\newcommand{\Ind}{\mathds{1}}
\newcommand{\Proba}{\mathbb{P}}
\newcommand{\argmin}{\mathop{\mathrm{arg\,min}}}
\newcommand{\opt}{\mathop{\mathrm{opt}}}
\newcommand{\LL}{{\mathrm{L}}}
\newcommand{\pricing}{\mathop{\mathrm{pricing}}}
\title{A finite dimensional approximation for pricing moving average
  options}
\author{Marie Bernhart\footnote{Laboratoire de Probabilités et Modèles Aléatoires, CNRS, UMR 7599, Universités Paris 6-Paris 7
		and EDF R$\&$D, 92141 Clamart, France. Email: \texttt{marie-externe.bernhart@edf.fr}}
	\and Peter Tankov\footnote{Centre de
    Math\'ematiques Appliqu\'ees, Ecole Polytechnique, 91128
    Palaiseau, France. Email: \texttt{peter.tankov@polytechnique.org}}
  \and Xavier Warin\footnote{EDF R$\&$D, 92141 Clamart, France and Laboratoire de Finance des Marchés de l'Energie, Université Paris Dauphine. Email: \texttt{xavier.warin@edf.fr}}
  }
\date{}
\begin{document}

\maketitle

\begin{abstract}
We propose a method for pricing American
options whose pay-off depends on the moving average of the underlying
asset price. The method uses a finite dimensional approximation of the
infinite-dimensional dynamics of the moving average process based on a
truncated Laguerre series expansion.
 The resulting problem
is a finite-dimensional optimal stopping problem, which we propose to
solve with a least squares Monte Carlo approach. We analyze the
theoretical convergence rate of our method and present numerical
results in the Black-Scholes framework.

\end{abstract}

\noindent Key words: American options, indexed swing options, moving average,
finite-dimensional approximation, Laguerre polynomial, least squares
Monte Carlo\\

\noindent MSC 2010: 91G20, 33C45

%
\section{Introduction}

We introduce a new method to value American options whose payoff at
exercise depends on the moving average
of the underlying asset price. The simplest example (sometimes known as
surge option) is a variable strike call or put, whose strike is adjusted daily to the
moving average of the underlying asset over a certain fixed-length
period preceding the current date. American-style options on moving average are widely used in energy markets.
In gas markets, for example, these options are known as indexed Swing
options and allow the holder
to purchase an amount of gas at a strike price, which is indexed on
moving averages of various oil-prices:
typically gas oil and fuel oil prices are averaged over the last 6 months and delayed in time with a 1 month lag.

We shall denote by $X$ the moving average of the underlying $S$ over a time window with fixed length $\delta > 0$:
\begin{eqnarray}
X_t = \frac{1}{\delta} \int_{t - \delta}^{t} S_{u} du, \quad \forall t \geq \delta.
\label{def-X}
\end{eqnarray}
The process $X$ follows the dynamics
\[ \begin{array}{c}
	d X_t = \frac{1}{\delta} \left( S_t - S_{t - \delta} \right)
        dt, \quad \forall t \geq \delta.
\end{array} \]
This shows in particular that even if $S$ is Markovian, the process $(S, X)$ is not:
it is, in general, impossible for any finite $n$ to
find $n$ processes $X^1,\ldots, X^n$ such that $(S, X,X^1,\ldots,X^n)$ are
jointly Markovian. This property makes the pricing of the moving window options
with early exercise a challenging problem both from the theoretical and the
numerical viewpoint. In a continuous time framework the problem is infinite dimensional,
 and in a discrete time framework (pricing of a Bermudan option instead of
an American option) there is a computational challenge, due to high
dimensionality: the dimension is equal to the number of time steps
within the averaging window. This in particular makes it difficult to compute the conditional expectations involved in the optimal exercise rule.

The problem of pricing moving average American options should not be
confused with a much simpler problem of pricing Asian American
options with a fixed start averaging window, where the payoff depends on
$$
A_t = \frac{1}{t} \int_{0}^{t} S_{u} du, \forall t > 0.
$$
It is well-known (see for example Wilmott and al. \cite{WHD96}) that in this case, adding a dimension to the problem allows to derive a finite dimensional Markovian formulation.
On the other hand, partial average Asian options of European style can
be easily valued (see for example Shreeve \cite{She97}). If the
averaging period has a length $\delta > 0$, then on $[T - \delta, T]$
the option value is given by the price of the corresponding Asian
option and on $[0, T - \delta]$ it solves a European style PDE with appropriate terminal and boundary conditions.

In this paper, we propose a method for pricing moving average American
options based on a finite dimensional approximation of the
infinite-dimensional dynamics of the moving average process. The
approximation is based on a truncated expansion of the weighting
measure used for averaging in a series involving
Laguerre polynomials.
This technique has long been used in signal processing (see for example \cite{Mal90}) but is less known in the context of approximation
of stochastic systems.
We compute the rate of convergence of our method as
function of the number of terms in the series.  The resulting problem is then a finite-dimensional optimal stopping problem, which we propose to solve with a Monte Carlo Longstaff and Schwartz-type approach.
Numerical results are presented for moving average options in the Black-Scholes framework.


In the literature, very few articles discuss moving average
options with early exercise feature \cite{Bil03,BC07,DLZ09,Gra08,KL03}. A common approach (see e.g., Broadie
and Cao \cite{BC07}) is to use the least squares Monte Carlo,
computing the conditional expectation estimators through regressions on
polynomial functions of the current values of the underlying price and its moving average.  Since the future evolution of the moving average
depends on the entire history of the price process between $t-\delta$
and $t$, this approach introduces a bias. In our numerical examples we
compare this approach to our results, and find that for standard
moving average American options the error is not so large (less than
1\% for the examples we took), which justifies the use of this
approach for practical purposes in spite of its suboptimality. For
moving average American options with time delay, whose payoff depends
on the average of the price between dates $t -\delta_1$ and
$t-\delta_2$, $0<\delta_2<\delta_1$, the suboptimal approximation
leads to a bias of up to $11\%$ of the option's price in our examples.

Bilger \cite{Bil03} uses a regression based approach in the
discrete-time setting to compute the conditional expectations
considering that the state vector is composed of the underlying price, its moving average and additional partial averages of the price over the rolling period. Their number is computed heuristically and as it tends to the number of time steps within the rolling period, the computed price tends to the true price of the moving average option.
The same kind of approach is used by Grau \cite{Gra08}, but the author
improves its numerical efficiency by a different choice of basis
functions in the regressions used for the conditional expectations
estimation.

Kao and Lyuu \cite{KL03} introduce a tree method based on the CRR
model to price moving average lookback and reset options. Their method can handle only short averaging windows: the numerical results that are shown deal at most with $5$ discrete observations in the averaging period. Indeed, this tree-based approach leads to an algorithm complexity (number of tree nodes) which exponentially increases with the number of time steps in the averaging period.
Finally, Dai et al.~\cite{DLZ09} introduce a lattice algorithm for pricing Bermudan moving average barrier options. The authors propose a finite dimensional PDE model for such options and solve it using a grid method.

The pricing of moving average options is closely related to high-dimensional optimal stopping problems.
It is well-known that deterministic techniques such as finite
differences or approximating trees are made inefficient by the
so-called curse of dimensionality. Only Monte Carlo type techniques
can handle American options in high dimensions. Bouchard and Warin
\cite{BW10} and references therein shall give to the interested reader
a recent review of this research field.

More generaly, a related problem is that of optimal stopping of
stochastic differential equations with delay. With the exception of a
few cases where explicit dimension reduction is possible
\cite{federico.oksendal.10,gapeev.reiss.06}, there is no numerical
method for solving such problems, and the Laguerre approximation
approach of the present paper is a promising direction for
further research.

The rest of the paper is structured as follows. In Section
\ref{sec-part-theo}, we introduce the mathematical context and provide a general result which links the strong error of
approximating one moving average process with another to a certain
distance between their weighting measures. We then introduce an approximation of the weighting measure as a series of Laguerre
functions truncated at $n$ terms, which leads to $(n+1)$-dimensional Markovian approximation
to the initial infinite dimensional problem.
The properties of Laguerre functions combined with our strong
approximation result then enable us to establish a bound on the pricing error introduced by our approach as $n$ goes to infinity.
In Section \ref{sec-method-num}, our numerical method, based on least
squares Monte Carlo algorithm, is presented.
The final section of this paper reports the results of numerical
experiments in the Black-Scholes framework which include pricing 
moving average options with time delay.

Throughout the paper we assume that the price of the underlying asset
$S = (S_t)_{t \geq 0}$ is a non-negative continuous Markov process
defined on the probability space $(\Omega, \Fcurs,\F,\Proba)$,
where $\Proba$ is a martingale probability for the
financial market and $\F = (\Fcurs_t)_{t \leq T}$ is the natural
filtration of $S$.  

For the sake of simplicity, we present our results in the
framework of a $1$-dimensional price model but they are directly
generalizable to a multi-asset model or to a model with unobservable
risk factors such as stochastic volatility.

As usual, we denote by $\LL^{2} \equiv \LL^{2}\left( [0, + \infty) \right)$ the Lebesgue space of real-valued square-integrable functions $f$ on $[0, + \infty)$ endowed with its norm:
$$
\left\| f \right\|_{2} := \left[ \int_{0}^{\infty} \left| f(x) \right|^{2} dx \right]^{\frac{1}{2}}.
$$

%
\section{\large A finite dimensional approximation of moving average options price}
\label{sec-part-theo}

\paragraph{Strong approximations of moving average processes} 
\label{subsec-error-general-MM-pb}

Consider a general moving average process of the form\footnote{In the
  literature (see \cite{basse.pedersen.09} and references therein),
  moving averages are usually defined via the stochastic integral of
  $S$. Our definition as an ordinary integral with respect to
a weighting measure is closer to the financial specifications. }
$$
M_t = \int_{0}^\infty  S_{t-u} \mu(du)
$$
where $\mu$ is a finite possibly signed measure on $[0,\infty)$.
Throughout the paper, we shall adopt the following convention for the values of $S$ on the negative time-axis:
\begin{equation}
	S_t = S_0,\quad  \forall t \leq 0.
\label{convention-S}
\end{equation}

We shall use an integrability assumption on the modulus of continuity of
 the price process: there exists a constant $C<\infty$ such that
\begin{align}
\E \left[\sup_{t,s\in [0,T]: |t-s|\leq h} |S_t -S_s|\right] \leq C
\varepsilon(h),\quad \varepsilon(h):=
\sqrt{h \ln \left(\frac{2T}{h}\right)}. \label{modcont.eq}
\end{align}
Fischer and Nappo \cite{fischer.nappo.10} show that this holds in particular when $S$ is a continuous Itô
process of the form
$$
S_t = S_0 + \int_0^t b_s ds + \int_0^t \sigma_s dW_s
$$
with
$$
\E \left[\sup_{0\leq t \leq T} |b_s|\right]<\infty \quad \text{and}\quad \E \left[\sup_{0\leq t \leq T} |\sigma_s|^{1+\gamma}\right]<\infty
$$
for some $\gamma>0$.

The following lemma provides a tool for comparing moving averages with
different weighting measures.
\begin{lem}\label{comp.lm}
Let Assumption \eqref{modcont.eq} be satisfied, and let $\mu$ and $\nu$ be
finite signed measures on $[0,\infty)$ with Jordan decompositions $\mu
= \mu^+ - \mu^-$ and $\nu
= \nu^+ - \nu^-$, such that $\mu^+(\mathbb R_+)>0$.
Define
$$
M_t = \int_{0}^\infty  S_{t-u} \mu(du),\qquad N_t = \int_{0}^\infty  S_{t-u} \nu(du).
$$
Then
\begin{multline}
\E \left[\sup_{0\leq t \leq T} |M_t - N_t|\right] \leq
C|\mu(\mathbb R_+)-\nu(\mathbb R_+)| \\+
C \Big( \mu^+([0,T])+\nu^-([0,T])+|\mu([0,T])-\nu([0,T])| \Big) \
\varepsilon\left(\frac{1}{\mu^+([0,T])} \int_0^T |F_\mu(t) - F_\nu(t)|dt\right)\label{distbound}
\end{multline}
for some constant $C<\infty$ which does not depend on $\mu$ and $\nu$,
where
$$
F_\nu(t) := \nu([0,t])\quad \text{and}\quad F_\mu(t) := \mu([0,t]).
$$
\end{lem}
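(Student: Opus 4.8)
The plan is to couple the two weighting measures by the monotone (one–dimensional optimal) transport and then use the modulus–of–continuity bound \eqref{modcont.eq}, so that pairing an atom at delay $u$ with one at delay $v$ costs at most $C\varepsilon(|u-v|)$ in expectation.

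First I would make two reductions. For $t\le T$ and $u>T$ one has $S_{t-u}=S_0$ by the convention \eqref{convention-S}, so transferring all the mass of $\mu$ and of $\nu$ from $(T,\infty)$ onto the single point $T$ changes neither $M_t$ nor $N_t$ on $[0,T]$ and does not alter $\int_0^T|F_\mu-F_\nu|\,dt$; hence I may take $\mu,\nu$ supported on $[0,T]$. If moreover $\mu(\R_+)\neq\nu(\R_+)$, adding $|\mu(\R_+)-\nu(\R_+)|$ units of mass at $\{T\}$ to the lighter measure shifts the corresponding moving average by the constant $S_0|\mu(\R_+)-\nu(\R_+)|$ — this produces the first term of \eqref{distbound} — and again leaves $\int_0^T|F_\mu-F_\nu|$ untouched. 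After this I may assume $\mu,\nu$ are finite signed measures on $[0,T]$ of equal total mass.

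Next, write $\lambda=\mu-\nu=\lambda^+-\lambda^-$ (Jordan decomposition); equal masses give $\lambda^+(\R_+)=\lambda^-(\R_+)=:m^\ast$. Let $Q_\pm$ be the quantile functions of $\lambda^\pm$ on $[0,m^\ast]$, so that $Q_\pm$ pushes Lebesgue measure on $[0,m^\ast]$ onto $\lambda^\pm$. Since $S\ge0$,
\[
M_t-N_t=\int_0^\infty\! S_{t-u}\,\lambda^+(du)-\int_0^\infty\! S_{t-u}\,\lambda^-(du)=\int_0^{m^\ast}\!\bigl(S_{t-Q_+(s)}-S_{t-Q_-(s)}\bigr)\,ds .
\]
As $Q_\pm(s)\in[0,T]$ and $S$ is constant on $(-\infty,0]$, for each $s$ and each $t\in[0,T]$ the increment $|S_{t-Q_+(s)}-S_{t-Q_-(s)}|$ is dominated by $\sup\{|S_a-S_b|:a,b\in[0,T],\,|a-b|\le|Q_+(s)-Q_-(s)|\}$, whose expectation is $\le C\varepsilon(|Q_+(s)-Q_-(s)|)$ by \eqref{modcont.eq}. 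Hence
\[
\E\Bigl[\sup_{0\le t\le T}|M_t-N_t|\Bigr]\le C\!\int_0^{m^\ast}\!\varepsilon\bigl(|Q_+(s)-Q_-(s)|\bigr)\,ds .
\]
Since $h\mapsto h\ln(2T/h)$ is concave and positive on $(0,2T)$, $\varepsilon$ is concave on $[0,2T)$; because $|Q_+-Q_-|\le T$, Jensen's inequality bounds the right–hand side by $C\,m^\ast\,\varepsilon\bigl(\tfrac1{m^\ast}\int_0^{m^\ast}|Q_+(s)-Q_-(s)|\,ds\bigr)$. The inner integral is $m^\ast$ times the $1$–Wasserstein distance of $\lambda^+/m^\ast$ and $\lambda^-/m^\ast$, namely $\int_0^\infty|\lambda^+([0,u])-\lambda^-([0,u])|\,du=\int_0^\infty|F_\mu-F_\nu|\,du$, which equals $\int_0^T|F_\mu-F_\nu|\,du$ because $F_\mu-F_\nu$ vanishes beyond $T$ after the reductions.

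Finally, I would pass from $m^\ast$ to the quantities in \eqref{distbound}. Minimality of the Jordan decomposition gives $\lambda^+\le\mu^++\nu^-$ and $\lambda^-\le\mu^-+\nu^+$, and tracking how the two reductions redistributed the tail masses and the added atom should yield $m^\ast\le K:=\mu^+([0,T])+\nu^-([0,T])+|\mu([0,T])-\nu([0,T])|$, as well as $\mu^+([0,T])\le K$. Together with $m^\ast\ge T^{-1}\int_0^T|F_\mu-F_\nu|$ (so the argument of $\varepsilon$ above is $\le T$) and the facts that $\varepsilon$ is non-decreasing on $[0,2T/e]$ and $x\mapsto x\,\varepsilon(c/x)=\sqrt{cx\ln(2Tx/c)}$ is non-decreasing for $x\ge c/(2Te)$, one converts $C\,m^\ast\,\varepsilon(c/m^\ast)$, with $c=\int_0^T|F_\mu-F_\nu|$, into $C\,K\,\varepsilon(c/\mu^+([0,T]))$ (enlarging $C$ to absorb the short range on which $\varepsilon$ decreases). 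I expect this last step — controlling $m^\ast$ and the various $[0,T]$/$\R_+$ mass quantities by $K$ while respecting the non-monotonicity of $\varepsilon$ — to be the real work; the coupling argument itself is routine once \eqref{modcont.eq} is in hand.
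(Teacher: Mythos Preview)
Your proof is correct and is, at its core, the same argument as the paper's: reduce to measures of equal mass supported on $[0,T]$, couple two positive measures via the monotone (quantile) transport, apply the modulus-of-continuity bound \eqref{modcont.eq} under the integral sign, use the concavity of $\varepsilon$ to pull the integral inside, and identify the resulting transport cost with $\int_0^T|F_\mu-F_\nu|$ via Kantorovich--Rubinstein.

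The one structural difference worth recording is \emph{which} positive measures are coupled. You couple the Jordan parts $(\hat\mu-\hat\nu)^\pm$ of the difference, with total mass $m^\ast$. The paper instead sets $\tilde\mu=\mu\,1_{[0,T]}$, $\tilde\nu=\nu\,1_{[0,T]}+(\mu([0,T])-\nu([0,T]))\delta_{2T}$ and couples the probability measures $(\tilde\mu^++\tilde\nu^-)/K'$ and $(\tilde\mu^-+\tilde\nu^+)/K'$, where $K'=\tilde\mu^+(\R_+)+\tilde\nu^-(\R_+)$. Both choices produce the same integral $\int_0^T|F_\mu-F_\nu|$ inside $\varepsilon$, but the paper's mass $K'$ automatically satisfies $\mu^+([0,T])\le K'\le K$, so the passage to the exact constants in \eqref{distbound} is immediate (only the monotonicity of $\varepsilon$, up to the harmless decreasing tail, is needed). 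Your $m^\ast$ is smaller---hence your intermediate bound $m^\ast\varepsilon(c/m^\ast)$ is actually sharper---but since $m^\ast\ge\mu^+([0,T])$ need not hold, converting to the stated form requires the extra monotonicity argument for $x\mapsto x\,\varepsilon(c/x)$ together with the bound $m^\ast\le K$, which is exactly the bookkeeping you flag as ``the real work''. The paper's decomposition simply trades a little sharpness for a cleaner final step.
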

\begin{proof}[\underline{\textbf{Proof}}]
\noindent\textit{Step 1.}\quad We first assume that $\mu$ and $\nu$ are
probability measures. Let $F^{-1}_{\mu}$ and $F^{-1}_{\nu}$ be
generalized inverses of $\mu$ and $\nu$ respectively. Then,
\begin{align*}
\E\left[\sup_{0\leq t \leq T} |M_t - N_t|\right] &= \E \left[\sup_{0\leq t
    \leq T} \int_0^1 |S_{t-F^{-1}_\mu(u)} -
  S_{t-F^{-1}_\nu(u)}|du\right]\\
& \leq \int_0^1 \E \left[\sup_{0\leq t
    \leq T} |S_{t-F^{-1}_\mu(u)} -
  S_{t-F^{-1}_\nu(u)}|\right] du\\
& \leq C \int_0^1 \varepsilon
\left(|F^{-1}_\mu(u)\wedge T-F^{-1}_\nu(u) \wedge T|\right) du \\
& \leq C \varepsilon \left(\int_0^1 |F^{-1}_\mu(u)\wedge T-F^{-1}_\nu(u) \wedge T|du \right),
\end{align*}
where the last inequality follows from the concavity of
$\varepsilon(h)$. The expression inside the brackets is the
Wasserstein distance between the measures $\mu$ and $\nu$ truncated at
$T$. Therefore, from the Kantorovich-Rubinstein theorem we deduce
$$
\E \left[\sup_{0\leq t \leq T} |M_t - N_t|\right] \leq C \varepsilon \left(\int_0^T |F_\mu(t)-F_\nu(t) |dt \right).
$$

\noindent\textit{Step 2.}\quad Introduce $\tilde \mu = \mu 1_{[0,T]}$ and $\tilde \nu = \nu 1_{[0,T]} +
(\mu([0,T])-\nu([0,T]))\delta_{2T}$, where $\delta_{2T}$
is the point mass at the point $2T$. Then,
\begin{align*}
& \E\left[\sup_{0\leq t \leq T} |M_t - N_t|\right] \leq C|\mu(\mathbb
R_+) - \nu(\mathbb R_+)| + \E \left[\sup_{0\leq t
    \leq T} \left|\int_0^\infty S_{t-u}\tilde\mu(du) -\int_0^\infty
    S_{t-u}\tilde\nu(du)\right|\right]\\
&\qquad \leq C|\mu(\mathbb
R_+) - \nu(\mathbb R_+)| \\ &+ (\tilde\mu^+(\mathbb R_+)+\tilde\nu^-(\mathbb
      R_+)) \E \left[\sup_{0\leq t
    \leq T} \left|\int_0^\infty
    S_{t-u}\frac{\tilde\mu^+(du) + \tilde\nu^-(du)}{\tilde\mu^+(\mathbb R_+)+\tilde\nu^-(\mathbb  R_+)} -\int_0^\infty
    S_{t-u}\frac{\tilde\mu^-(du) + \tilde\nu^+(du)}{\tilde\mu^+(\mathbb R_+)+\tilde\nu^-(\mathbb  R_+)}\right|\right].
\end{align*}
Since $\tilde\mu^+(\mathbb R_+)+\tilde\nu^-(\mathbb R_+) = \tilde\mu^-(\mathbb R_+)+\tilde
\nu^+(\mathbb R_+)$, both measures under the
integral sign are probability measures, and we can apply Step 1, which
gives
\begin{align*}
& \E\left[\sup_{0\leq t \leq T} |M_t - N_t|\right] \leq C|\mu(\mathbb
R_+) - \nu(\mathbb R_+)|\\
&\qquad +  C(\tilde\mu^+(\mathbb R_+)+\tilde\nu^-(\mathbb R_+))
\varepsilon\left(\frac{1}{ \tilde\mu^+(\mathbb R_+)+\tilde\nu^-(\mathbb
    R_+)}\int_0^T |F_{\tilde\mu^+ + \tilde \nu^-}(t) - F_{\tilde\mu^- + \tilde
    \nu^+}(t)|dt\right)\\
& = C|\mu(\mathbb
R_+) - \nu(\mathbb R_+)| +  C(\tilde\mu^+(\mathbb R_+)+\tilde\nu^-(\mathbb R_+))
\varepsilon\left(\frac{1}{ \tilde\mu^+(\mathbb R_+)+\tilde\nu^-(\mathbb
    R_+)}\int_0^T |F_{\mu}(t) - F_{\nu}(t)|dt\right),
\end{align*}
because $\tilde \mu$ coincides with $\mu$ and $\tilde \nu$ coincides with $\nu$ on $[0,T]$. Using the
properties of the function $\varepsilon$ and the definition of $\tilde
\mu$ and $\tilde \nu$, we then get
\eqref{distbound} with a different constant $C$.
\end{proof}


\paragraph{Introducing Laguerre approximation}
The aim of this paragraph is to provide heuristic arguments which lead
to Laguerre approximation of the moving average. 
We would like to find a finite-dimensional approximation to
$M$, that is, find $n$ processes $Y^1,\ldots,Y^n$ such that
$(S,Y^1,\ldots,Y^n)$ are jointly Markov, and $M_t$ is approximated in
some sense to be made precise later by $M^n_t$ which depends
deterministically on $S_t,Y^1_t, \ldots, Y^n_t$.

Since $M$ is linear in $S$, it
is natural to require that the approximation also be linear. Therefore, we assume that $Y =
(Y^1,\ldots,Y^n)$ satisfies the linear SDE
\begin{align}
dY_t = - A Y dt + \mathbf{1} (\alpha S_t dt + \beta d S_t) ,\label{sde}
\end{align}
where $A$ is an $n\times n$ matrix, $\mathbf 1$ is a
$n$-dimensional vector with all components equal to $1$ and $\alpha$
and $\beta$ are constants. Similarly,
the approximation is given by a linear combination of the components of
$Y$: $M^n = B^\perp Y$, where $B$ is a vector of size $n$ and $\perp$
denotes the matrix transposition.

The solution to \eqref{sde} can be written as
$$
Y_t = e^{-At} Y_0 + \int_0^t e^{-A(t-s)}\mathbf{1} (\alpha S_s ds + \beta d S_s)
$$
or, assuming stationarity, as
$$
Y_t = \int_{-\infty}^t e^{-A(t-s)}\mathbf{1} (\alpha S_s ds + \beta d S_s) \quad
\text{and}\quad M^n_t = \int_{-\infty}^t B^\perp e^{-A(t-s)}\mathbf{1} (\alpha S_s ds + \beta d S_s).
\quad
$$
Integration by parts then yields:
$$
M^n_t = \beta B^\perp \mathbf{1} S_t +  \int_{-\infty}^t B^\perp
(\alpha - A \beta) e^{-A(t-s)}\mathbf{1}S_s ds:= K_n S_t + \int_{-\infty}^t h_n(t-u) S_u du,
$$
Recalling the structure of the matrix exponential, it follows that the function $h_n$ is of the form
\begin{align}
h_n(t) = \sum_{k=1}^{K} e^{-p_k t} \sum_{i=0}^{n_k} c_i^k t^i,\label{hankel}
\end{align}
where $n_1 + \ldots + n_K + K = n$ ($K$ is the number of Jordan blocks
of $A$).
Therefore, the problem of finding a finite-dimensional approximation
for $M$ boils down to finding an approximation of the form $K_n
\delta_{0}(dt) + h_n(t) dt$ for the measure $\nu$. This problem is well known in
signal processing, where the density $h$ of $\nu$ is called impulse
response function of a system, and $h_n$ is called Hankel
approximation of $h$. For arbitrary $\mu$ and $n$, Hankel approximations
may be very hard to find, and in this paper we shall focus on a
subclass for which $K=1$, that is, the function $h_n$ is of the form
\begin{align}
h_n(t) =  e^{-p t} \sum_{i=0}^{n-1} c_i t^i.\label{laguerre}
\end{align}
This is known as Laguerre approximation, because for a fixed
$p$, the first $n$ scaled Laguerre
functions (defined below) form an orthonormal basis of the space of
all functions of the form \eqref{laguerre} endowed with the scalar
product of $\LL^2([0,\infty))$ which will be denoted by $\left\langle \cdot , \cdot \right\rangle$. See \cite{maekilae.96} for a discussion
of optimality of Laguerre approximations among all approximations of
type \eqref{hankel}.

\begin{defi} Fix a scale parameter $p > 0$. The scaled Laguerre functions $(L^{p}_k)_{k \geq 0}$ are defined on $[0, + \infty)$ by
\begin{equation}
L^{p}_k (t) = \sqrt{2p} \ P_k (2pt) e^{- p t}, \quad \forall k \geq 0
\label{def-laguerre-parametre}
\end{equation}
in which $(P_k)_{k \geq 0}$ is the family of Laguerre polynomials explicitly defined on $[0, + \infty)$ by
\begin{equation}
P_k (t) = \sum_{i = 0}^{k} {k \choose k - i} \frac{(-t)^{i}}{i !},
\quad \forall k \geq 0
\label{def-explicite-laguerre}
\end{equation}
or recursively by
\begin{equation}
\begin{cases}
P_{0}(t) = 1 \\
P_{1}(t) = 1 - t \\
P_{k + 1}(t) = \frac{1}{k + 1} \left( (2k + 1 - t)P_k(t) - k P_{k - 1}(t)\right), \forall k \geq 1.
\end{cases}
\label{prop-rec-laguerre}
\end{equation}
The scaled Laguerre functions $(L^{p}_k)_{k \geq 0}$ form an
orthonormal basis of the Hilbert space $\LL^2([0,\infty))$, i.e.
$$
\forall (j, k),\quad  \left\langle L^{p}_j, L^{p}_k \right\rangle = \delta_{j, k}.
$$
\end{defi}
Fix now an order $n \geq 1$ of truncation of the series. In view of
Lemma \ref{comp.lm}, we propose the following Laguerre approximation
of the moving average process $M$:
\begin{itemize}
\item Let $H(x) = \mu([x,+\infty))$.
\item Compute the Laguerre coefficients of the function $H$:
$$
A^p_k = \langle H, L^p_k \rangle.
$$
Set $H^p_n(t) = \sum_{k=0}^{n-1} A^p_k L^p_k(t)$ and $h^p_n(t) =
-\frac{d}{dt}H^p_n(t)$. In view of Lemma \ref{deriv.lm}, the function
$h^p_n$ can be written as
\begin{eqnarray}
h^p_n  = \sum_{k=0}^{n-1} a^{p}_k L^{p}_k,\qquad a^p_k = pA^p_k + 2p
\sum_{i=k+1}^{n-1} A^p_i.
\label{def-approx-I}
\end{eqnarray}
\item Approximate the moving average $M$ with
\begin{eqnarray}
M^{n, p}_t = (H(0)-H^p_n(0))S_t + \int_{0}^{+\infty} h^{p}_n (u) S_{t-u} du, \quad \forall t \geq 0.
\label{def-approx-X}
\end{eqnarray}
\end{itemize}

\begin{rem} The approximation proposed in \eqref{def-approx-X} (and in particular the correction coefficient in front of $S_t$) is chosen so that the total mass of the weighting measure of the approximate moving average $M^{n, p}_t$ is equal to the total mass of the weighting measure $\mu$ of the exact moving average. In particular, such an approximation becomes exact for a constant asset price $S$.
\end{rem}

From definitions \eqref{def-approx-I} and \eqref{def-approx-X}, it seems natural to introduce $n$ random processes $X^{p, 0}, \ldots, X^{p, n-1}$ defined by
\begin{eqnarray}
X^{p, k}_t =  \int_{0}^{+\infty} L^{p}_k(v) S_{t-v} dv, \forall t \geq 0, \forall k = 0, \ldots, n-1.
\label{def-Xk}
\end{eqnarray}
They will be called \textit{Laguerre
  processes} associated to the process $S$ throughout this paper and are related to the moving average approximation by
\begin{eqnarray}
M^{n, p}_t = (H(0)-H^p_n(0))S_t +\sum_{k=0}^{n-1} a^{p}_k X^{p, k}_t,\quad \forall t \geq 0.
\label{def-approx-X-sum}
\end{eqnarray}

%
\begin{prop} Let $n \geq 1$ and $p > 0$. The $(n+1)$-dimensional process
  $(S, X^{p, 0}, X^{p, 1},
  \ldots, X^{p, n-1})$ is Markovian. The $n$ Laguerre processes
  follow the dynamics:
\begin{equation*}
	\begin{cases}
	d X^{p, 0}_t = \left( \sqrt{2p} S_t - p X^{p, 0}_t \right) dt \\
	d X^{p, 1}_t = \left( \sqrt{2p} S_t - 2 p X^{p, 0}_t - p X^{p, 1}_t \right) dt \\
	\vdots \\
	d X^{p, n-1}_t = \left( \sqrt{2p} S_t - 2 p \sum^{n-2}_{k = 0} X^{p, k}_t - p X^{p, n-1}_t \right) dt \\
	\end{cases}
\end{equation*}
with initial values
\begin{equation}
X^{p, k}_0 = S_0 (-1)^{k} \frac{\sqrt{2p}}{p}, \forall k \geq 0.
\label{def-val-ini-Xk}
\end{equation}
\label{prop-approx-pb-markov}
\end{prop}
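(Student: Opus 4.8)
The plan is to prove the three assertions of Proposition \ref{prop-approx-pb-markov} in order: first derive the system of SDEs for the Laguerre processes, then deduce the Markov property as an immediate consequence, and finally compute the initial values.

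First I would establish the dynamics. Fix $k \in \{0,\ldots,n-1\}$ and differentiate
$X^{p,k}_t = \int_0^{+\infty} L^p_k(v) S_{t-v}\,dv$ with respect to $t$. After the change of variable $u = t-v$, one has $X^{p,k}_t = \int_{-\infty}^{t} L^p_k(t-u) S_u\,du$, so formally
$d X^{p,k}_t = L^p_k(0) S_t\,dt + \int_{-\infty}^t (L^p_k)'(t-u) S_u\,du\,dt$. Using $L^p_k(0) = \sqrt{2p}$ (from $P_k(0)=1$) and, crucially, a recursion for the derivative $(L^p_k)'$ expressed back in terms of $L^p_0,\ldots,L^p_k$, the integral term reduces to a linear combination of $X^{p,0}_t,\ldots,X^{p,k}_t$. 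Concretely, from \eqref{def-laguerre-parametre} one computes $(L^p_k)'(t) = \sqrt{2p}\,e^{-pt}\big(2p\,P_k'(2pt) - p\,P_k(2pt)\big)$, and the standard Laguerre identity $P_k'(x) = -\sum_{j=0}^{k-1} P_j(x)$ (which follows from \eqref{def-explicite-laguerre} or \eqref{prop-rec-laguerre}) gives $(L^p_k)'(t) = -p\,L^p_k(t) - 2p\sum_{j=0}^{k-1} L^p_j(t)$. Substituting this into the integral and recognizing each $\int_{-\infty}^t L^p_j(t-u)S_u\,du$ as $X^{p,j}_t$ yields exactly
$d X^{p,k}_t = \big(\sqrt{2p}\,S_t - p X^{p,k}_t - 2p\sum_{j=0}^{k-1} X^{p,j}_t\big)\,dt$,
which is the stated system (the sum being empty for $k=0$). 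Some care is needed to justify differentiating under the integral sign and the interchange of $d/dt$ with the improper integral; this is legitimate because $L^p_k$ and $(L^p_k)'$ decay exponentially and $S$ has, by \eqref{modcont.eq}, locally integrable (indeed a.s. continuous) paths with controlled growth, so dominated convergence applies on the event of finite running maximum.

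Next, the Markov property follows at once: the system above shows that $(S_t, X^{p,0}_t,\ldots,X^{p,n-1}_t)$ evolves as $S$ together with a deterministic (path-by-path) linear functional of the past of $S$ that is itself the solution of an autonomous linear ODE driven by $S_t$. Hence, conditionally on $\mathcal F_t$, the future of the vector $(S, X^{p,0},\ldots,X^{p,n-1})$ depends only on the future of $S$ and on the current values $X^{p,k}_t$ (which serve as the initial condition for the ODE on $[t,\infty)$), and the future of $S$ depends only on $S_t$ by the Markov property of $S$. Writing the solution of the linear system on $[t,T]$ explicitly via the matrix exponential of the (lower-triangular) coefficient matrix makes this rigorous: for any bounded measurable $\varphi$ and $T\geq t$, $\E[\varphi(S_T, X^{p,\cdot}_T)\mid \mathcal F_t]$ is a measurable function of $(S_t, X^{p,0}_t,\ldots,X^{p,n-1}_t)$ alone.

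Finally, for the initial values, the stationarity convention \eqref{convention-S} gives $S_u = S_0$ for all $u\leq 0$, so
$X^{p,k}_0 = \int_0^{+\infty} L^p_k(v) S_{-v}\,dv = S_0 \int_0^{+\infty} L^p_k(v)\,dv$.
It remains to evaluate $\int_0^\infty L^p_k(v)\,dv = \sqrt{2p}\int_0^\infty P_k(2pv)e^{-pv}\,dv = \frac{1}{\sqrt{2p}}\int_0^\infty P_k(x)e^{-x/2}\,dx$. Using either the explicit formula \eqref{def-explicite-laguerre} together with $\int_0^\infty x^i e^{-x/2}\,dx = i!\,2^{i+1}$, or a generating-function argument for $\sum_k P_k(x) z^k = (1-z)^{-1}\exp(-xz/(1-z))$ evaluated appropriately, one obtains $\int_0^\infty P_k(x)e^{-x/2}\,dx = 2(-1)^k$, whence $X^{p,k}_0 = S_0(-1)^k \frac{\sqrt{2p}}{p}$, as claimed.

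I expect the main obstacle to be the derivative identity for the scaled Laguerre functions, $(L^p_k)' = -p L^p_k - 2p\sum_{j<k} L^p_j$ — more precisely, being sure of the normalization constants and the sign conventions induced by the particular definitions \eqref{def-laguerre-parametre}--\eqref{prop-rec-laguerre} used here (which differ from some textbook conventions), and the attendant bookkeeping that reproduces the exact coefficients $\sqrt{2p}$, $-p$, $-2p$ in the stated system. The analytic justification for differentiating under the integral sign is routine given the exponential decay of the Laguerre functions and the path regularity of $S$, but should be mentioned.
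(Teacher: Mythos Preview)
Your proposal is correct and follows essentially the same route as the paper: the paper's one-line proof cites precisely the ingredients you use, namely the definition \eqref{def-Xk}, the scaling \eqref{def-laguerre-parametre}, and Lemma~\ref{lem-lag}~(i)--(ii), whose combination is exactly your key identity $P_k'=-\sum_{j<k}P_j$ (hence $(L^p_k)'=-pL^p_k-2p\sum_{j<k}L^p_j$). Your treatment simply spells out what the paper leaves implicit, including the Markov argument and the computation of $\int_0^\infty L^p_k=(-1)^k\sqrt{2p}/p$ for the initial values.
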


\begin{proof}[\underline{\textbf{Proof}}] Immediate, from Equations
  \eqref{def-Xk} and \eqref{def-laguerre-parametre} and properties \ref{lem-lag}-\eqref{prop-Lag-a} and \ref{lem-lag}-\eqref{prop-Lag-b}.
\end{proof}


\paragraph{Convergence of the Laguerre approximation}

\begin{prop} Let Assumption \eqref{modcont.eq} be
  satisfied, and suppose that the moving average process $M$ is of the
  form
\begin{align}
M_t = K_0 S_t + \int_{0}^\infty  S_{t-u} h(u)du
\label{ma.def}
\end{align}
where $K_0$ is a constant and the function $h$ has compact support, finite variation on
  $\mathbb R$, is constant in the neighborhood of zero and
  is not a.e. negative on $[0,T]$. Then the error of approximation
  \eqref{def-approx-X-sum}
admits the bound
$$
\mathbb E \left[ \sup_{0\leq t\leq T} |M_t - M^{n,p}_t| \right] \leq C \varepsilon(n^{-\frac{3}{4}}).
$$
\label{prop-erreur-mm}
\end{prop}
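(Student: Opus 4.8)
The plan is to apply Lemma~\ref{comp.lm} with $\mu$ the weighting measure of $M$, namely $\mu = K_0\delta_0 + h(u)\,du$, and $\nu = \nu_{n,p}$ the weighting measure of $M^{n,p}$, namely $\nu_{n,p} = (H(0)-H^p_n(0))\delta_0 + h^p_n(u)\,du$. By the Remark following \eqref{def-approx-X}, the two measures have the same total mass, so $\mu(\mathbb R_+)-\nu(\mathbb R_+)=0$ and the first term in \eqref{distbound} vanishes. It then remains to control (i) the prefactor $\mu^+([0,T])+\nu^-([0,T])+|\mu([0,T])-\nu([0,T])|$, which I will bound by a constant uniformly in $n$, and (ii) the argument of $\varepsilon$, namely $\int_0^T |F_\mu(t)-F_\nu(t)|\,dt$, which I will show is $O(n^{-3/4})$; since $\varepsilon$ is nondecreasing, this gives $\mathbb E[\sup_t|M_t-M^{n,p}_t|]\le C\,\varepsilon(O(n^{-3/4})) = C\,\varepsilon(n^{-3/4})$ up to adjusting $C$.

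The heart of the matter is step (ii). Observe that $F_\mu(t)-F_\nu(t) = \mu([0,t])-\nu([0,t])$. Writing $\bar H(t):=\mu([t,\infty)) = \mu(\mathbb R_+)-\mu([0,t])$ and similarly $\bar H_n(t):=\nu_{n,p}([t,\infty))$, and using equal total masses, we get $F_\mu(t)-F_\nu(t) = \bar H_n(t)-\bar H(t)$. Now by construction $\bar H(t) = H(t) = \mu([t,\infty))$ for $t>0$ (and $\bar H(0)=\mu(\mathbb R_+)$, matching $H(0^-)$), while, integrating the relation $h^p_n = -\frac{d}{dt}H^p_n$ from \eqref{def-approx-I} together with the jump $(H(0)-H^p_n(0))\delta_0$, one checks that $\bar H_n(t) = H^p_n(t)$ for $t>0$. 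Hence for $t>0$,
$$
F_\mu(t)-F_\nu(t) = H^p_n(t)-H(t),
$$
and therefore
$$
\int_0^T |F_\mu(t)-F_\nu(t)|\,dt \le \int_0^\infty |H^p_n(t)-H(t)|\,dt \le \sqrt{T}\,\|H-H^p_n\|_2 + \Big(\int_T^\infty |H(t)-H^p_n(t)|\,dt\Big),
$$
by Cauchy--Schwarz on $[0,T]$. Under the hypotheses on $h$ (compact support, finite variation, constant near $0$), the function $H(t)=\mu([t,\infty))$ lies in a Sobolev-type class for which the truncated Laguerre expansion converges at rate $\|H-H^p_n\|_2 = O(n^{-3/4})$: this is the classical rate for Laguerre partial sums of functions whose first derivative (here $-h$) has finite variation and which vanish appropriately — I would quote or derive this from the decay of the Laguerre coefficients $A^p_k = \langle H, L^p_k\rangle$ (integration by parts against $L^p_k$, using property \ref{lem-lag}, reduces a factor; one gets $|A^p_k| = O(k^{-5/4})$, whence $\|H-H^p_n\|_2^2 = \sum_{k\ge n}|A^p_k|^2 = O(n^{-3/2})$). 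The tail term $\int_T^\infty|H-H^p_n|$ is controlled by the exponential decay of $e^{-pt}$ against the polynomial growth in $H^p_n$, so it is exponentially small in $n$ and negligible.

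The main obstacle is precisely establishing the $O(n^{-3/4})$ Laguerre approximation rate for $H$ from the structural assumptions on $h$. This requires: (a) an integration-by-parts lemma expressing $A^p_k$ in terms of the Laguerre coefficients of the finite-variation measure $dh$ (using that $H$ is constant near $0$ to kill boundary terms at $t=0$, and compact support to kill them at $t=\infty$), together with (b) a uniform bound on $\|L^p_k\|_\infty$ — the Laguerre functions satisfy $|L^p_k(t)| = O(k^{-1/4})$ uniformly in $t$ (a standard fact about Laguerre polynomials), which upgrades a crude $O(k^{-1})$ from one integration by parts to the desired $O(k^{-5/4})$. I would also need to verify the prefactor bound in step (i): $\mu^+([0,T])$ and $|\mu([0,T])-\nu([0,T])| = |H(0^+)-H^p_n(0^+)|$ are clearly bounded (the latter by $\|H-H^p_n\|_\infty \le \sum|A^p_k|\|L^p_k\|_\infty$, which converges), and $\nu^-([0,T]) \le \int_0^T (h^p_n)^-(t)\,dt + (H^p_n(0)-H(0))^- $ is bounded because $h^p_n \to h$ in $L^2$ and in particular $\|h^p_n\|_{L^1([0,T])}$ stays bounded; the hypothesis that $h$ is "not a.e. negative on $[0,T]$" guarantees $\mu^+([0,T])>0$ so that Lemma~\ref{comp.lm} applies and the normalization $1/\mu^+([0,T])$ inside $\varepsilon$ is a harmless constant. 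Assembling these pieces yields the claimed bound.
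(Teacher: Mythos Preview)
Your overall approach is the same as the paper's: apply Lemma~\ref{comp.lm} to the two weighting measures, use equal total mass to kill the first term, identify $F_\mu-F_\nu$ with $H^p_n-H$, bound the integral by $\sqrt{T}\|H-H^p_n\|_2$, and get the rate from $|A^p_k|=O(k^{-5/4})$. Two details in your sketch are wrong, though both are repairable.

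First, your claimed ``standard fact'' that $|L^p_k(t)|=O(k^{-1/4})$ \emph{uniformly in $t$} is false: $L^p_k(0)=\sqrt{2p}\,P_k(0)=\sqrt{2p}$ for every $k$, so $\|L^p_k\|_\infty$ does not tend to zero. What is true---and what the paper uses via Lemma~\ref{coefs.lm}, proved through Bessel-function representations (Lemma~\ref{integrals.lm})---is that $e^{-x/2}P_n(x)=O(n^{-1/4})$, and its first and second antiderivatives are $O(n^{-3/4})$ and $O(n^{-5/4})$, \emph{uniformly on compacts bounded away from $0$}. Your integration-by-parts reduction to an integral of the double antiderivative against the finite-variation measure $dh$ then works precisely because the hypothesis ``$h$ constant near $0$'' forces $\mathrm{supp}(dh)$ to avoid a neighborhood of $0$; this is the real role of that hypothesis, not ``killing boundary terms'' (the boundary terms at $0$ vanish anyway since the antiderivatives start at $0$).

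Second, the tail term $\int_T^\infty|H-H^p_n|$ is superfluous---the integral in \eqref{distbound} is already over $[0,T]$, so Cauchy--Schwarz on $[0,T]$ suffices directly, as the paper does. Your claim that this tail is ``exponentially small in $n$'' is in any case false: the tail of the Laguerre series at a fixed $t>T$ decays only polynomially in $n$.
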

%
\begin{proof}[\underline{\textbf{Proof}}]
We shall use Lemma \ref{comp.lm}. The
measures $\mu$ and $\nu$ are defined by $\mu(dx) = K_0 \delta_0(dx) + h(x) dx$ and
$\nu(dx) = (H(0)-H^p_n(0))\delta_0(dx) + h^p_n(x) dx$. Therefore, these
measures have the same mass, and the first term in estimate
\eqref{distbound} disapears. In addition,
$$
|\mu([0,T])-\nu([0,T])| = |H(T) - H^p_n(T)|,
$$
which remains bounded by Lemma \ref{coefs.lm}. Let us show that
$\nu^-([0,T])$ is bounded as well. For this it is enough to prove that
$\|(H^p_n)'\|_2$ is bounded on $n$. A straightforward computation using
Lemma \ref{deriv.lm} shows that
\begin{eqnarray}
c^p_k & = & \sqrt{2p} H(0) - 2p \sum_{i = 0}^{k-1} A^p_i - p A^p_k,
\label{lien-c-A}
\end{eqnarray}
where $c^p_k:=\langle h, L^p_k\rangle$ are Laguerre coefficients of $h$. By definition of $H^p_n$ and $a^p_k$ in \eqref{def-approx-I}, this leads to
\[ \begin{array}{rcl}
a^p_k & = & c^p_k - \sqrt{2p} \left[ H(0) - H^{p}_n (0) \right].
\end{array} \]
We have thus:
$$
\|(H^p_n)'\|^2_2 = \sum_{k\leq n-1} |a^p_k|^2
\leq 2\sum_{k\leq n-1} \left|c^p_k\right|^2 + 2 \sum_{k\leq n-1} \left| \sqrt{2p} \left[ H(0)-H^p_{n}(0) \right] \right|^2 = \Ocurs (n^{-\frac{1}{2}})
$$
by Lemma \ref{coefs.lm} and using $\sqrt{2p} \left[ H(0)-H^p_{n}(0) \right] = c^p_n + p A^p_n$ issued from \eqref{lien-c-A}.
Therefore, there
exists a constant $C<\infty$, which does not depend on $n$, such that
$$
\mathbb E[\sup_{0\leq t\leq T} |M_t - M^{n,p}_t|] \leq C
\varepsilon\left(\frac{1}{\int_0^T h^+(t) dt} \int_0^T |H(t) - H^p_n(t)|dt\right).
$$
By Cauchy-Schwartz inequality and Lemma \ref{coefs.lm},
$$
\int_0^T |H(t) - H^p_n(t)|dt \leq \sqrt{T}\|H-H^p_n\|_2 = \sqrt{T} \left( \sum_{k\geq
  n} \left|A^p_k\right|^2 \right)^{\frac{1}{2}} = \Ocurs(n^{-\frac{3}{4}}),
$$
from which the result follows using the properties of $\varepsilon$
and the fact that $h$ is not a.e. negative on $[0,T]$ (which means
that $\int_0^T h^+(t)dt>0$).

\end{proof}

\paragraph{Approximating option prices}
The price of the American option whose pay-off depends on the moving
average $M$ and the price of the underlying is given by
$$
\sup_{\tau \in \Tcurs} \E \left[ \phi \left( S_{\tau}, M_{\tau} \right) \right]
$$
where $\Tcurs$ is the set of $\F$-stopping times
and $\phi$ is the payoff function. It can then be approximated
by the solution to
\begin{eqnarray}
\sup_{\tau \in \Tcurs} \E \left[ \phi \left( S_{\tau}, M^{n, p}_{\tau} \right) \right].
\label{def-approx-pb}
\end{eqnarray}

\begin{prop} Let Assumption \eqref{modcont.eq} be
  satisfied, and suppose that the payoff function $\phi$ is Lipschitz
  in the second variable and that the moving average process $M$
  satisfies the assumptions of Proposition \ref{prop-erreur-mm}. Then the pricing error admits the bound
\begin{eqnarray*}
\Ecurs_{\pricing} (n, p)  := \left|	\ \sup_{\tau \in \Tcurs} \E \left[ \phi \left( S_{\tau}, M_{\tau} \right)
  \right] - \sup_{\tau \in \Tcurs} \E \left[ \phi \left(
      S_{\tau},  M^{n, p}_{\tau} \right) \right] \right|\leq C \varepsilon(n^{-\frac{3}{4}}).
\end{eqnarray*}
where $C > 0$ is a constant independent of $n$.
\label{prop-erreur-pricing}
\end{prop}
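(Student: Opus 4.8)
The plan is to reduce the pricing error to the strong approximation error already controlled by Proposition \ref{prop-erreur-mm}, using only the Lipschitz property of $\phi$ and the elementary stability of the value of an optimal stopping problem under a perturbation of the reward.

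First I would record the nonexpansiveness of the supremum: for any two maps $\tau\mapsto a_\tau$ and $\tau\mapsto b_\tau$ on $\Tcurs$ one has $|\sup_{\tau\in\Tcurs}a_\tau-\sup_{\tau\in\Tcurs}b_\tau|\leq\sup_{\tau\in\Tcurs}|a_\tau-b_\tau|$. Applying this with $a_\tau=\E[\phi(S_\tau,M_\tau)]$ and $b_\tau=\E[\phi(S_\tau,M^{n,p}_\tau)]$ and then using $|\E[Z]|\leq\E[|Z|]$ gives
\[
\Ecurs_{\pricing}(n,p)\leq \sup_{\tau\in\Tcurs}\big|\E[\phi(S_\tau,M_\tau)]-\E[\phi(S_\tau,M^{n,p}_\tau)]\big|\leq \sup_{\tau\in\Tcurs}\E\big[|\phi(S_\tau,M_\tau)-\phi(S_\tau,M^{n,p}_\tau)|\big].
\]
Here I would note that both value functions are taken over the same class $\Tcurs$ of $[0,T]$-valued $\F$-stopping times and that $M^{n,p}$ is $\F$-adapted (it is a deterministic functional of the path of $S$, by \eqref{def-approx-X}), so both optimal stopping problems are well posed over $\Tcurs$.

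Next, since $\phi$ is Lipschitz in its second argument with some constant $L<\infty$ that is uniform in the first argument,
\[
\E\big[|\phi(S_\tau,M_\tau)-\phi(S_\tau,M^{n,p}_\tau)|\big]\leq L\,\E\big[|M_\tau-M^{n,p}_\tau|\big]\leq L\,\E\Big[\sup_{0\leq t\leq T}|M_t-M^{n,p}_t|\Big],
\]
the last bound because $\tau$ takes values in $[0,T]$; as the right-hand side no longer depends on $\tau$, taking $\sup_{\tau\in\Tcurs}$ preserves it. Finally, $M$ satisfies the hypotheses of Proposition \ref{prop-erreur-mm}, whose conclusion yields $\E[\sup_{0\leq t\leq T}|M_t-M^{n,p}_t|]\leq C'\varepsilon(n^{-3/4})$; combining, $\Ecurs_{\pricing}(n,p)\leq LC'\,\varepsilon(n^{-3/4})=:C\,\varepsilon(n^{-3/4})$ with $C$ independent of $n$.

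There is no genuine obstacle here: the only points that deserve a word of care are that $\Tcurs$ consists of stopping times bounded by the maturity $T$ (so the pathwise supremum over $[0,T]$ dominates the value at $\tau$), and that the Lipschitz constant of $\phi$ in the second variable is uniform in the first variable (so it may be pulled out of the expectation). Everything else is the nonexpansiveness of $\sup$ together with the triangle inequality for the integral, and an appeal to Proposition \ref{prop-erreur-mm}.
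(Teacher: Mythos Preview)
Your argument is correct and follows essentially the same route as the paper: both reduce the difference of the two optimal-stopping values to $\sup_{\tau}\E|M_\tau-M^{n,p}_\tau|$, invoke the Lipschitz property of $\phi$ in the second variable, bound by $\E[\sup_{0\leq t\leq T}|M_t-M^{n,p}_t|]$, and then apply Proposition~\ref{prop-erreur-mm}. The only cosmetic difference is that you use the nonexpansiveness inequality $|\sup a_\tau-\sup b_\tau|\leq\sup|a_\tau-b_\tau|$ directly, whereas the paper writes out the one-sided bound and then appeals to symmetry; these are the same step.
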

%
\begin{proof}[\underline{\textbf{Proof}}]
We have first:
\begin{eqnarray*}
	\forall \tau, \E \left[ \phi \left( S_{\tau}, M_{\tau} \right) \right] & = & \E \left[ \phi \left( S_{\tau},   M^{n, p}_{\tau} \right) \right] + \E \left[ \phi \left( S_{\tau}, M_{\tau} \right) - \phi \left( S_{\tau},   M^{n, p}_{\tau} \right) \right] \\
	\Longrightarrow \
	\sup_{\tau} \E \left[ \phi \left( S_{\tau}, M_{\tau} \right) \right] & = & \sup_{\tau} \Big( \E \left[ \phi \left( S_{\tau},    M^{n, p}_{\tau} \right) \right] + \E \left[ \phi \left( S_{\tau}, M_{\tau} \right) - \phi \left( S_{\tau},    M^{n, p}_{\tau} \right) \right] \Big) \\
	& \leq & \sup_{\tau} \E \left[ \phi \left( S_{\tau},    M^{n, p}_{\tau} \right) \right] + \sup_{\tau} \E \left[ \phi \left( S_{\tau}, M_{\tau} \right) - \phi \left( S_{\tau},    M^{n, p}_{\tau} \right) \right] 		
\end{eqnarray*}
In consequence,
\begin{eqnarray*}
	\sup_{\tau} \E \left[ \phi \left( S_{\tau}, M_{\tau} \right) \right] - \sup_{\tau} \E \left[ \phi \left( S_{\tau},    M^{n, p}_{\tau} \right) \right]
		& \leq & \sup_{\tau} \E \left| \phi \left( S_{\tau}, M_{\tau} \right) - \phi \left( S_{\tau},    M^{n, p}_{\tau} \right) \right|.
\end{eqnarray*}
By symmetry and $\mathrm{(A2)}$, we get
\begin{eqnarray*}
 \left|	\ \sup_{\tau} \E \left[ \phi \left( S_{\tau}, M_{\tau} \right) \right] - \sup_{\tau} \E \left[ \phi \left( S_{\tau},    M^{n, p}_{\tau} \right) \right] \right|
		& \leq & \sup_{\tau} \E \left| M_{\tau} -   M^{n, p}_{\tau} \right| \\
		& \leq  &  \mathbb E \left[\sup_{0\leq t\leq T} |M_t - M^{n,p}_t| \right]
\end{eqnarray*}
and the result follows from Proposition \ref{prop-erreur-mm}.
\end{proof}

\paragraph{Uniformly-weighted moving average}
\label{subsec-cas-uniforme}
The uniform weighting measure
\begin{equation}
\mu(dx) = h(x)dx = \frac{1}{\delta} \Ind_{[0, \delta]}dx
\label{heavyside}
\end{equation}
satisfies the assumptions of Proposition \ref{prop-erreur-mm}.
In particular, $H (x) = \frac{1}{\delta} \left( \delta - x
\right)^{+}$.  From Lemma \ref{deriv.lm}, the Laguerre coefficients $A^{\delta, p}_k = \langle H,
L^p_k \rangle$ are related to the Laguerre coefficients of $h$,
$c^{\delta,p}_k = \left\langle h, L^{p}_k \right\rangle_2$, via
\begin{eqnarray}
	A^{\delta, p}_k & = & (-1)^k \frac{\sqrt{2p}}{p} - \frac{1}{p} c^{\delta,p}_k - \frac{2}{p} \sum_{i = 0}^{k-1} (-1)^{k-i} c^{\delta, p}_i,
\label{coef-A-H}
\end{eqnarray}
and the coefficients of $h$ can be computed from the values of
Laguerre polynomials:
\begin{align}
c^{\delta,p}_n = \frac{\sqrt{2 p}}{\delta p} \left[ (1-e^{-p\delta} P_n(2p\delta))
+ 2 \sum_{k=1}^n (-1)^k (1-e^{-p\delta}
P_{n-k}(2p\delta)) \right]. \label{formule-coef-fourier}
\end{align}

Given the length of the averaging window $\delta > 0$ and an order $n
\geq 1$ of approximation (number of Laguerre functions), we determine the optimal scale parameter $p_{\opt}(\delta, n)$ as
\begin{align}
p_{\opt}(\delta, n) = \argmin_{p > 0} \left\| H - H^p_n \right\|_2  =
\argmin_{p > 0} \left\{\frac{\delta}{3} - \sum_{k = 0}^{n-1} \left|
    A^{\delta, p}_k \right|^2\right\}.
  \label{error-H}
\end{align}
Finding an explicit formula to $p_{\opt}(\delta, n)$ does not seem to be
possible, but finding a numerical solution is easy using the explicit expressions \eqref{coef-A-H}, \eqref{formule-coef-fourier} and \eqref{error-H}.
In addition, once $p_{\opt}$ is computed for a couple $(1, n)$, the scaling property of Laguerre functions \eqref{def-laguerre-parametre} gives the value of $p_{\opt} (\delta, n)$, for any $\delta > 0$:
$$
p_{\opt} (\delta, n) = \frac{p_{\opt} (1, n)}{\delta}.
$$

Table \ref{tab-opt-scale-parameter} gives the values for $p_{\opt} (1, n)$ for the
first $10$ values of $n$ computed with an accuracy of $10^{-3}$.
Figure \ref{approx-H-opti} (left graph) illustrates the approximation
of $H$ by the truncated Laguerre expansion $H^{p_{\opt}(n)}_n$ for $n
= 1, 3, 7$ Laguerre basis functions (with $\delta=1$). The corresponding error $|| H -
H^{p_{\opt}(n)}_n ||_2$ as a function of $n$ is shown in the right graph. The error is less than $5 \%$ already with $n = 3$. A simple least squares estimation by a power function gives a behavior in $\Ocurs( n^{-1.06} )$. \\

\UnTableau{
\begin{tabular}{|c|cccccccccc|}
	\hline
	$n$ & 1 & 2 & 3 & 4 & 5 & 6 & 7 & 8 & 9 & 10 \\
	\hline
	$p_{\opt}(1, n)$ & 2.149 & 4.072 & 6.002 & 4.234 & 5.828 & 7.473 & 9.155 & 10.866 & 9.153 & 10.726 \\
	\hline
\end{tabular}
}{Optimal scaling parameters for approximating $H (x) = \frac{1}{\delta} \left( \delta - x \right)^{+}$.}{tab-opt-scale-parameter}{0}{0.2}
%
%
\DeuxFiguresACote{0.2}{8}{0}{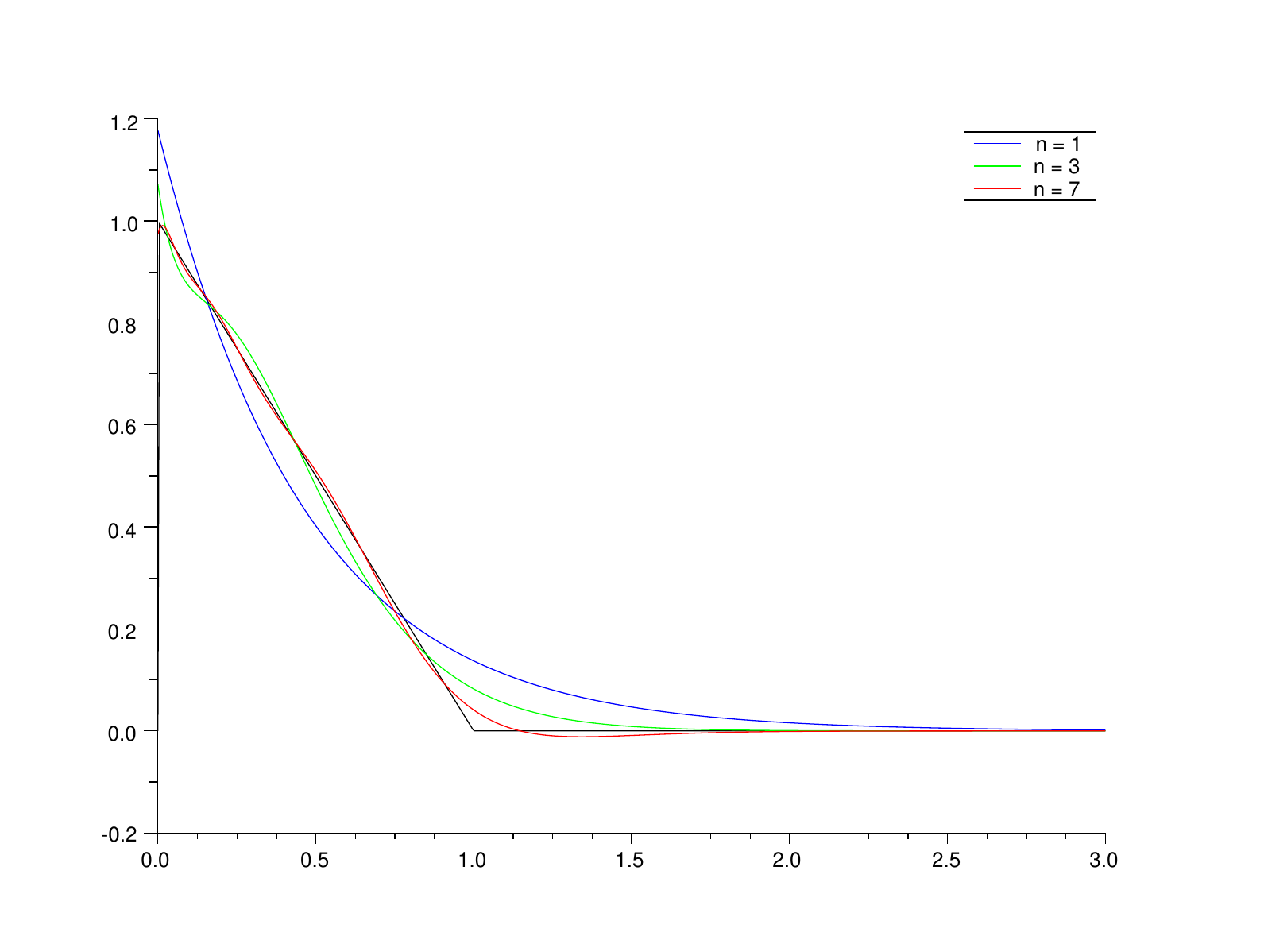}{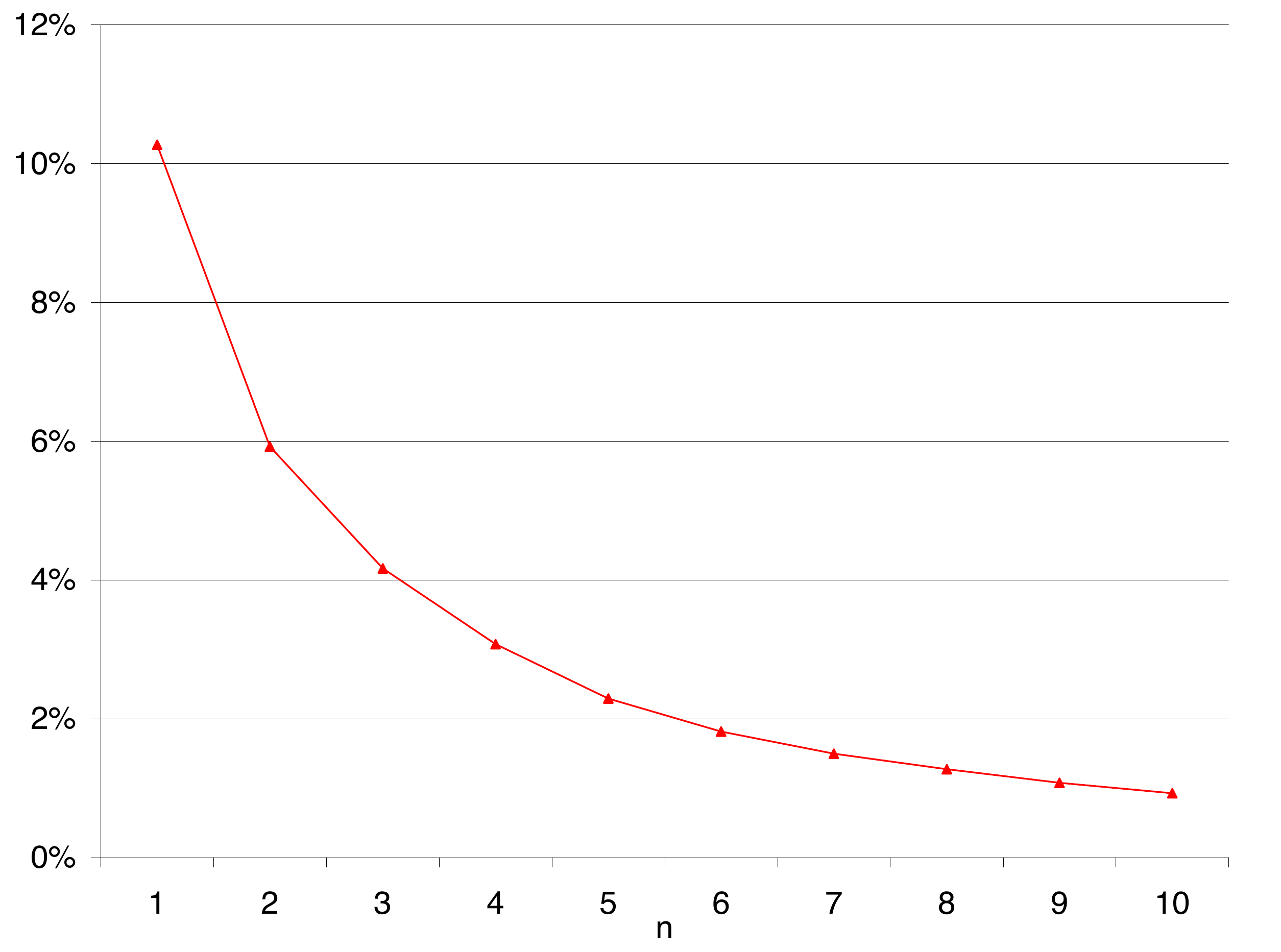}{Left:
  Laguerre approximation of the function $H (x) = \frac{1}{\delta}
  \left( \delta - x \right)^{+}$. Right: $L^2$ error of the approximation.}{approx-H-opti}{-0.1}{-0.2}


Proposition \ref{prop-erreur-mm} also applies when the moving average
is delayed by a fixed time lag $l \geq 0$:
\begin{align}
X_t = \frac{1}{\delta} \int_{t - l - \delta}^{t - l} S_{u} du,\ \forall t \geq \delta + l.
\label{rem-ma-option-with-lag}
\end{align}
In this case, the weighting measure is $\mu(dx) = \frac{1}{\delta}
\Ind_{[l, l + \delta]} dx$ and $H(x) = \frac{1}{\delta}\left\{(\delta
  + l - x)^+ - (l-x)^+ \right\}$.
Figure \ref{approx-H-opti-lag} shows the approximation of $H$ by
$H^{p_{\opt}(n)}_n$ for $n = 1, 3, 5, 7$ Laguerre basis functions with
$\delta = 1$ and $l = 0.5$ as well as the $\LL^2$-error made as a function of $n$
(in the same way as above, we numerically compute $p_{\opt}(n)$ for minimizing the $\LL^2$-error made by Laguerre approximation).
In comparison to the previous case, it appears that
the number of Laguerre basis functions necessary to approximate $H$ is
greater for an equivalent accuracy of the approximation: the error is
less than $5 \%$ from $n = 5$. This is due to the fact that the
density of the weighting measure has two points of discontinuity. \\


\DeuxFiguresACote{0.2}{8}{0}{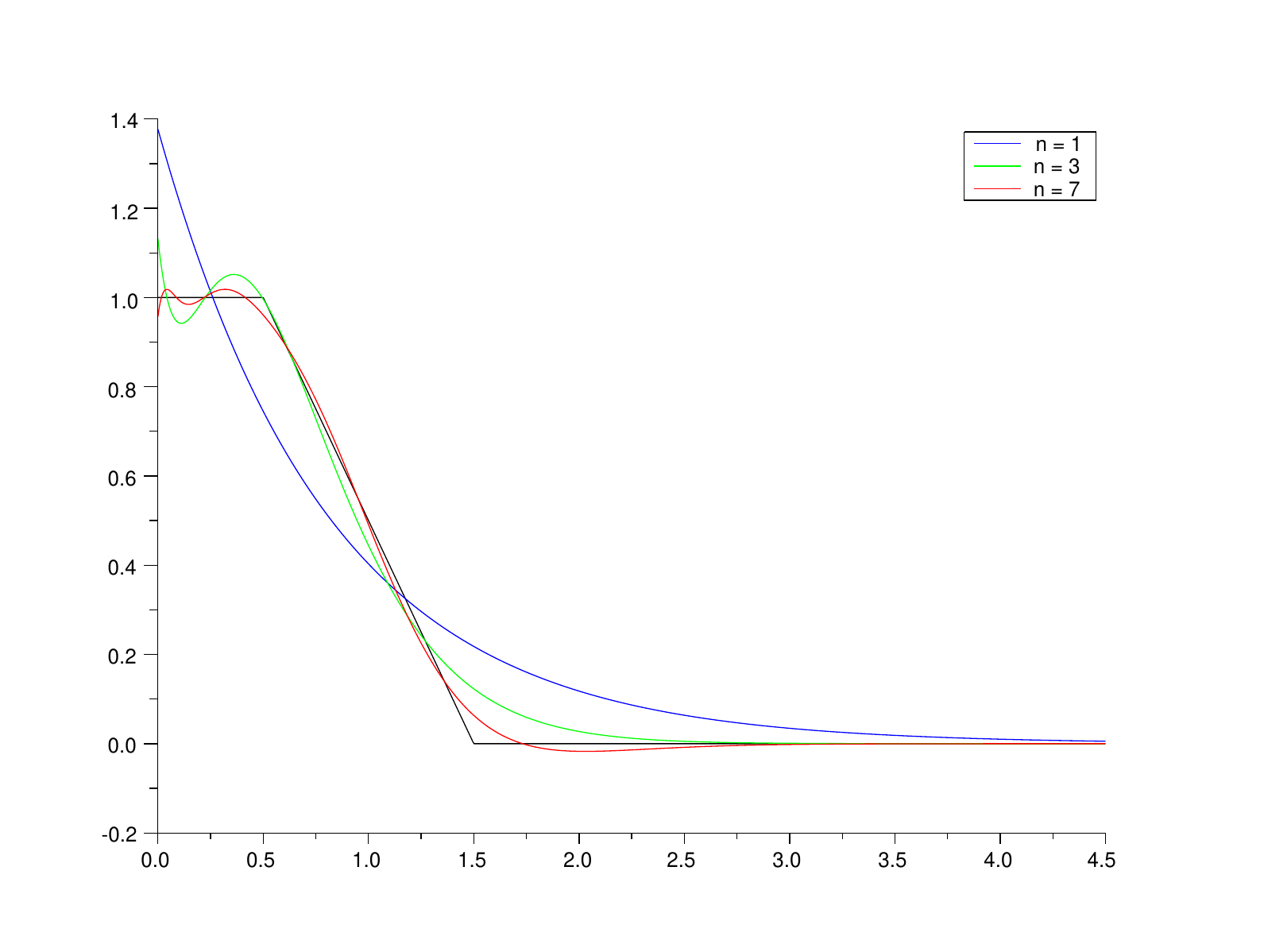}{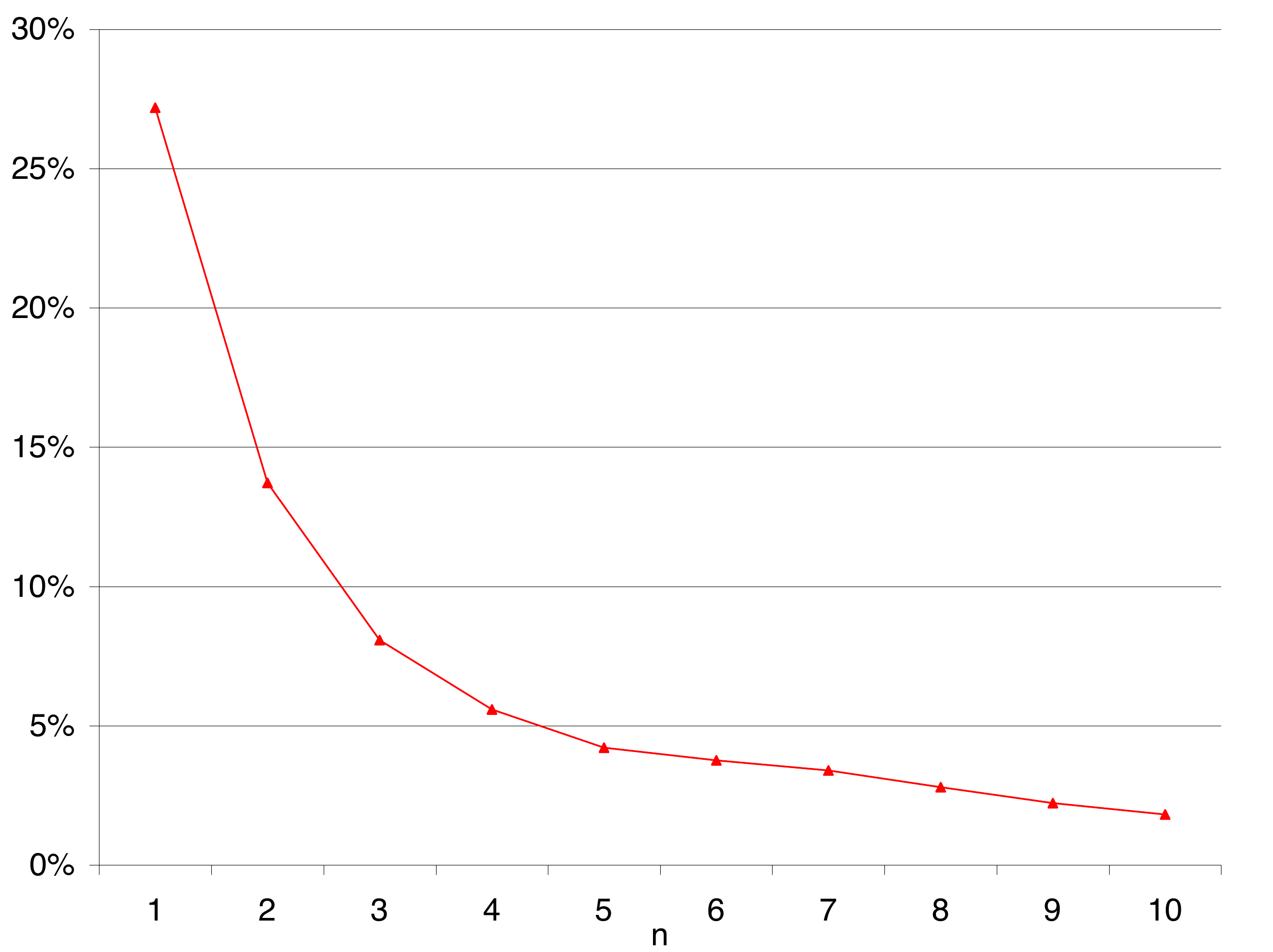}{Left:
  Laguerre approximation of the function $H(x) =
  \frac{1}{\delta}\left\{(\delta  + l - x)^+ - (l-x)^+
  \right\}$. Right: $L^2$ error of the approximation.}{approx-H-opti-lag}{-0.1}{-0.2}

%
\section{A Monte Carlo-based numerical method}
\label{sec-method-num}


In this section we present a numerical method for computing the solution to the approximate problem \eqref{def-approx-pb}, which is a $(n+1)$-dimensional optimal stopping time problem.
For the sake of simplicity, we restrict ourselves to uniformly
weighted moving averages. Since the dimension of the problem may be
high, we use a Monte Carlo technique. Our numerical approach corresponds to the one from Longstaff and Schwartz \cite{LS01} and the computation of conditional expectations is done with a regression based approach. In particular, we shall use the technique of adaptative local basis proposed by Bouchard and Warin \cite{BW10}.

\paragraph{Forward simulation in discrete time}
We compute the price of the \textit{discrete time version} of the American option \eqref{def-approx-pb} in which the moving average $X$ has been replaced by its approximation $M^{n, p_{\opt}}$ defined in \eqref{def-approx-X-sum}
and the exercise is possible on an equidistant discrete time grid with $N \geq 1$ time steps $\Delta t = \frac{T}{N}$:
$
\pi = \left\{t_0 = 0, t_1, \ldots, t_N = T\right\}
$. We assume that there are exactly $N_{\delta} \geq 1$ time steps
within the averaging window of length $\delta$: $N_{\delta} =
\frac{\delta}{\Delta t} = \frac{\delta}{T} N$, and that the spot price
$S$ can be simulated on $\pi$ either exactly or using the Euler scheme.
%
We denote this simulated discrete time price by
$$
\left\{S^{\pi}_{t_0} = S_0, S^{\pi}_{t_1}, \ldots, S^{\pi}_{t_N}\right\},
$$
extend this definition to $[0, T]$ by
\begin{equation}
S^{\pi}_t = S^{\pi}_{t_{i+1}}, \ \forall t \in (t_{i}, t_{i+1}]
\label{def-ext-Spi}
\end{equation}
and shall also apply convention \eqref{convention-S} to $S^{\pi}$.
We define the discrete time version $X^{\pi}$ of the moving average process $X$ by
\begin{equation}
X^{\pi}_{t_i} = \frac{1}{\delta} \int_{t_i - \delta}^{t_i} S^{\pi}_t dt
 =  \frac{1}{N_{\delta}} \sum^{i}_{j = i - N_{\delta} + 1}  S^{\pi}_{t_j}, \ \forall t_i \in \pi.
\label{def-Xpi}
\end{equation}
Similarly, the discrete time versions of the Laguerre processes are
defined by
\begin{equation*}
X^{p, k, \pi}_{t_i} = \int_{-\infty}^{t_i} L^p_k(t_i-v)S^\pi_t dt=  \sum_{j = 1}^{i} \left( S^{\pi}_{t_j} - S^{\pi}_{t_{j-1}} \right) (i-j+1) \Delta t \ c^{(i-j+1) \Delta t, p}_{k}
		+ S_0 (-1)^k  \frac{\sqrt{2p}}{p}, \ \forall t_i \in \pi.
\end{equation*}

\paragraph{Backward resolution of the optimal stopping time problem}

The resolution is based on the well-known backward American dynamic programming principle. We adopt a Longstaff and Schwartz-style approach
which consists in estimating the optimal exercise time (or
equivalently the optimal cashflows generated by the optimal exercise
rule) instead of focusing on the computation of the option value
processes (as for example in Tsitsiklis and Van Roy
\cite{TvR01}). Throughout the paper, the approach presented below will be called (Lag-LS).
The optimal payoffs are evaluated using the \textit{approximate value} of moving average $X^{\pi}$ derived from \eqref{def-approx-X-sum}:
\begin{equation}
M^{n, p_{\opt}, \pi}_{t_i} = (H(0)-H^{p_{\opt}}_n(0)) S^{\pi}_{t_i} + \sum_{k=0}^{n-1} a^{p_{\opt}}_k X^{p_{\opt}, k, \pi}_{t_i}, \forall t_i \in \pi.
\label{def-Mpi}
\end{equation}

Denote by 
$(\tau^{\pi}_{i})_{i = N_{\delta}, \ldots, N}$ the sequence of discretized optimal exercise times: $\tau^{\pi}_i$ is the optimal exercise time after $t_i \in \pi$.
The backward algorithm works as follows:
\begin{enumerate}
	\item Initialization: $\tau^{\pi}_{N} = T$
	\item Backward induction for $i = N-1, \ldots, N_{\delta}$:
	\begin{eqnarray*}
		\tau^{\pi}_{i} = t_i \Ind_{A_{i}} + \tau^{\pi}_{i+1} \Ind_{\complement A_{i}}
		\text{ with } A_{i} = \left\{ \phi \left( S^{\pi}_{t_i}, M^{n, p_{\opt}, \pi}_{t_i} \right) \geq \E_{t_{i}} \left[ \phi \left( S^{\pi}_{\tau^{\pi}_{i+1}}, M^{n, p_{\opt}, \pi}_{\tau^{\pi}_{i+1}} \right) \right] \right\}
	\end{eqnarray*}
	\item Estimation of the option price at time $0$:
	\begin{eqnarray*}
		V^{\pi}_0 = \E \left[ \phi \left( S^{\pi}_{\tau^{\pi}_{N_{\delta}}}, M^{n, p_{\opt}, \pi}_{\tau^{\pi}_{N_{\delta}}} \right) \right]
	\end{eqnarray*}
\end{enumerate}
in which:
$$
\E_{t_{i}} \left[ \cdot \right] = \E \left[ \cdot | \left( S^{\pi}_{t_i}, X^{p_{\opt}, 0, \pi}_{t_i}, \ldots, X^{p_{\opt}, n-1, \pi}_{t_i} \right) \right].
$$
Estimators of the conditional expectations are constructed with a Monte-Carlo based technique. It consists in using $M \geq 1$ simulated paths on $\pi$ of the $(n+1)$-dimensional state process:
$$
\left( S^{\pi, (m)}, X^{p_{\opt}, 0, \pi, (m)}, \ldots, X^{p_{\opt},
    n-1, \pi, (m)} \right), \ m=1,\dots,M.
$$
The corresponding paths of the approximate moving average are denoted by:
$$
M^{n, p_{\opt}, \pi, (m)}, \ m=1,\dots,M.
$$
Conditional expectations estimators $\E^{M}_{t_i}$ are then computed by regression on local basis functions (see the precise description of the procedure in Bouchard and Warin \cite{BW10}). We shall denote by $\left(b^{S}, b^{X}_0, \ldots, b^{X}_{n-1} \right)$ the numbers of basis functions used in each direction of the state variable: $b^{S}$ for $S^{\pi}$, $b^{X}_0$ for $X^{p_{\opt}, 0, \pi}$, $b^{X}_1$ for $X^{p_{\opt}, 1, \pi}$, etc. The Monte-Carlo based backward procedure becomes thus:
\begin{enumerate}
	\item Initialization: $\tau^{\pi, (m)}_{N} = T$, $m=1,\dots,M$
	\item Backward induction for $i = N-1, \ldots, N_{\delta}$, $m=1,\dots,M$:
	\begin{eqnarray*}
	\begin{cases}
		\tau^{\pi, (m)}_{i} = t_i \Ind_{A^{(m)}_{i}} + \tau^{\pi, (m)}_{i+1} \Ind_{\complement A^{(m)}_{i}} \\
		A^{(m)}_{i} = \left\{ \phi \left( S^{\pi, (m)}_{t_i}, M^{n, p_{\opt}, \pi, (m)}_{t_i} \right) \geq \E^{M}_{t_i} \left[ \phi \left( S^{\pi}_{\tau^{\pi}_{i+1}}, M^{n, p_{\opt}, \pi}_{\tau^{\pi}_{i+1}} \right) \right] \right\}
	\end{cases}
	\end{eqnarray*}
	\item Estimation of the option price at time $0$:
	\[ \begin{array}{c} 
	V^{\pi}_0 = \frac{1}{M} \sum_{m=1}^{M}  \phi \left( S^{\pi, (m)}_{\tau^{\pi, (m)}_{N_{\delta}}}, M^{n, p_{\opt}, \pi, (m)}_{\tau^{\pi, (m)}_{N_{\delta}}} \right)
\end{array} \]
\end{enumerate}

\begin{rem} We will use a numerical improvement to this standard backward induction algorithm, which might seem rather natural for practitioners. 
It consists in evaluating the optimal payoffs using the \textit{exact value} \eqref{def-Xpi} of the moving average.
In particular, the optimal stopping frontier becomes:
$$
 A^{*}_{i} = \left\{ \phi \left( S^{\pi}_{t_i}, X^{\pi}_{t_i} \right) \geq \E_{t_{i}} \left[ \phi \left( S^{\pi}_{\tau^{\pi}_{i+1}}, X^{\pi}_{\tau^{\pi}_{i+1}} \right) \right] \right\}.
$$
This improved method will be called (Lag-LS*) and unlike (Lag-LS) will
exhibit a monotone convergence as $n$ goes to infinity.
\end{rem}

\paragraph{"Non Markovian'' approximation for moving average options}
Motivated by a reduction of dimensionality,
the numerical method that is most often used in practice to value
moving average options consists in computing the conditional
expectations in the Longstaff-Schwartz algorithm using only the explanatory variables $(S, X)$: namely, the price and the moving average appearing in the option payoff.
The resulting exercise time is thus suboptimal, but the approximate
option price is often close to the true price.
To assess the improvement offered by our method, we systematically
compare our approximation to this suboptimal approximate price, also
computed using a Longstaff and Schwartz approach and referred to as
(NM-LS). 

Let $(\theta^{\pi}_{i})_{i = N_{\delta}, \ldots, N}$ denote the
discrete time sequence of the estimated optimal exercise times
($\theta^{\pi}_i$ being the optimal exercise time after $t_i \in
\pi$).  (NM-LS) algorithm works as follows:
\begin{enumerate}
	\item Initialization: $\theta^{\pi}_{N} = T$
	\item Backward induction for $i = N-1, \ldots, N_{\delta}$:
	\begin{eqnarray*}
		\theta^{\pi}_{i} = t_i \Ind_{A_{i}} + \theta^{\pi}_{i+1} \Ind_{\complement A_{i}}
		\text{ with } A_{i} = \left\{ \phi \left( S^{\pi}_{t_i}, X^{\pi}_{t_i} \right) \geq \E \left[ \phi \left( S^{\pi}_{\theta^{\pi}_{i+1}}, X^{\pi}_{\theta^{\pi}_{i+1}} \right)  | \left( S^{\pi}_{t_i}, X^{\pi}_{t_i} \right)\right] \right\}
	\end{eqnarray*}
	\item Estimation of the option price at time $0$:
	\begin{eqnarray*}
		U^{\pi}_0 = \E \left[ \phi \left( S^{\pi}_{\theta^{\pi}_{N_{\delta}}}, X^{\pi}_{\theta^{\pi}_{N_{\delta}}} \right) \right]
	\end{eqnarray*}
\end{enumerate}
Similarly to other methods, the conditional expectations are computed with the adaptative local basis regression-based technique from \cite{BW10}. The numbers of basis functions used in each direction will be denoted by $b^{S}$ for $S^{\pi}$ and $b^{X}$ for $X^{\pi}$.

%
\section{Numerical examples}
\label{sec-part-num}

For our examples, we use the single-asset Black and Scholes
framework. We study standard moving average options for different
values of the averaging window $\delta$ as well as moving average
options with delay \eqref{rem-ma-option-with-lag}.

With the same notations as in Section \ref{sec-method-num}, recall that the dimension of the discrete time version of moving average option pricing problem is equal to $N_{\delta}$ with a Markovian state:
$$
\left( S^{\pi}_{t_i}, S^{\pi}_{t_{i-1}}, \ldots, S^{\pi}_{t_{i - N_{\delta + 1}}} \right), \forall t_i \in \pi, t_i \geq t_{N_{\delta}}.
$$
We use the standard Longstaff and Schwartz algorithm for such a Bermudan option in dimension $N_{\delta}$ as the benchmark method.
This method will be called (M-LS) and our Monte Carlo regression based
approach (see more details in \cite{BW10}) allows to deal with cases
up to dimension $8$. For applications in which $N_{\delta}$ is larger,
this method becomes computationally unfeasible.

We provide at the end of this section a numerical comparison between the convergence rate of our Laguerre-based approximation and (M-LS) with respect to the state dimension.

\paragraph{Moving average options: 
  benchmark prices}
Consider a standard moving average American option  with value at time $0$:
$$
\sup_{\tau \in \Tcurs_{[\delta, T]}} \E \left[ e^{-r \tau} \phi \left(
    S_{\tau}, X_{\tau}
  \right) \right],\quad X_{\tau} = \frac{1}{\delta} \int^{\tau}_{\tau-\delta} S_u du
$$
where the asset price $S$ is assumed to follow the risk-neutral Black and Scholes dynamics:
\begin{eqnarray*}
dS_t = S_t \left( r dt + \sigma dW_t \right), \ S_0 = s
\end{eqnarray*}
and $W$ is a standard Brownian motion. We shall consider call options
with pay-off $\phi(s, x) = (s - x)^{+}$.
Unless specified otherwise, the following
parameters are used below:
\begin{center}
\begin{tabular}{l|l}
	Maturity & $T = 0.2$ \\
	Risk free interest & $r = 5 \%$ \\
	Volatility & $\sigma = 30 \%$ \\
	Initial spot value & $s = 100$ \\
\end{tabular}
\end{center}
and we consider a Bermudan option with exercise possible every day (when $T = 0.2$, the time interval $[0, T]$ is divided into $N = 50$ time steps).

Table \ref{tab-comp-MLS-NMLS} shows the prices of moving average call options computed by (NM-LS) and (M-LS) for various averaging periods $\delta$, with $M = 10$ million of Monte Carlo paths and $b^S = b^X = 2$. The prices are averages over 5 valuations and the relative standard deviation is given in brackets. 
For reasonable volatility coefficients of the underlying price process
and relatively small averaging window $\delta$, (NM-LS) seems to
provide a very good approximation (from below) to moving average
options prices. This justifies the approximation made by
practitioners and (among others) by Broadie and Cao \cite{BC07}. \\

\UnTableau{
\begin{tabular}{|c|c|c|}
	\hline
	$N_{\delta}$ & (NM-LS) & (M-LS) \\
	\hline
	 2 & 1.890	(0.011 \%) & 1.890	(0.011 \%) \\
	 3 & 2.684	(0.011 \%) & 2.685	(0.010 \%) \\
	 4 & 3.183	(0.018 \%) & 3.186	(0.012 \%) \\
	 5 & 3.526 	(0.016 \%) & 3.531	(0.007 \%) \\
	 6 & 3.773 	(0.016 \%) & 3.780	(0.013 \%) \\
	 7 & 3.955	(0.011 \%) & 3.964	(0.215 \%) \\
	 8 & 4.092	(0.015 \%) & 4.103	(0.316 \%) \\
	 9 & 4.193	(0.016 \%) &  \\
	 10 & 4.268	(0.019 \%) &  \\
	\hline
	\end{tabular}
}{Moving average options pricing with (NM-LS) and (M-LS).}{tab-comp-MLS-NMLS}{0}{0.2}


\UneFigure{15}{13}{0}{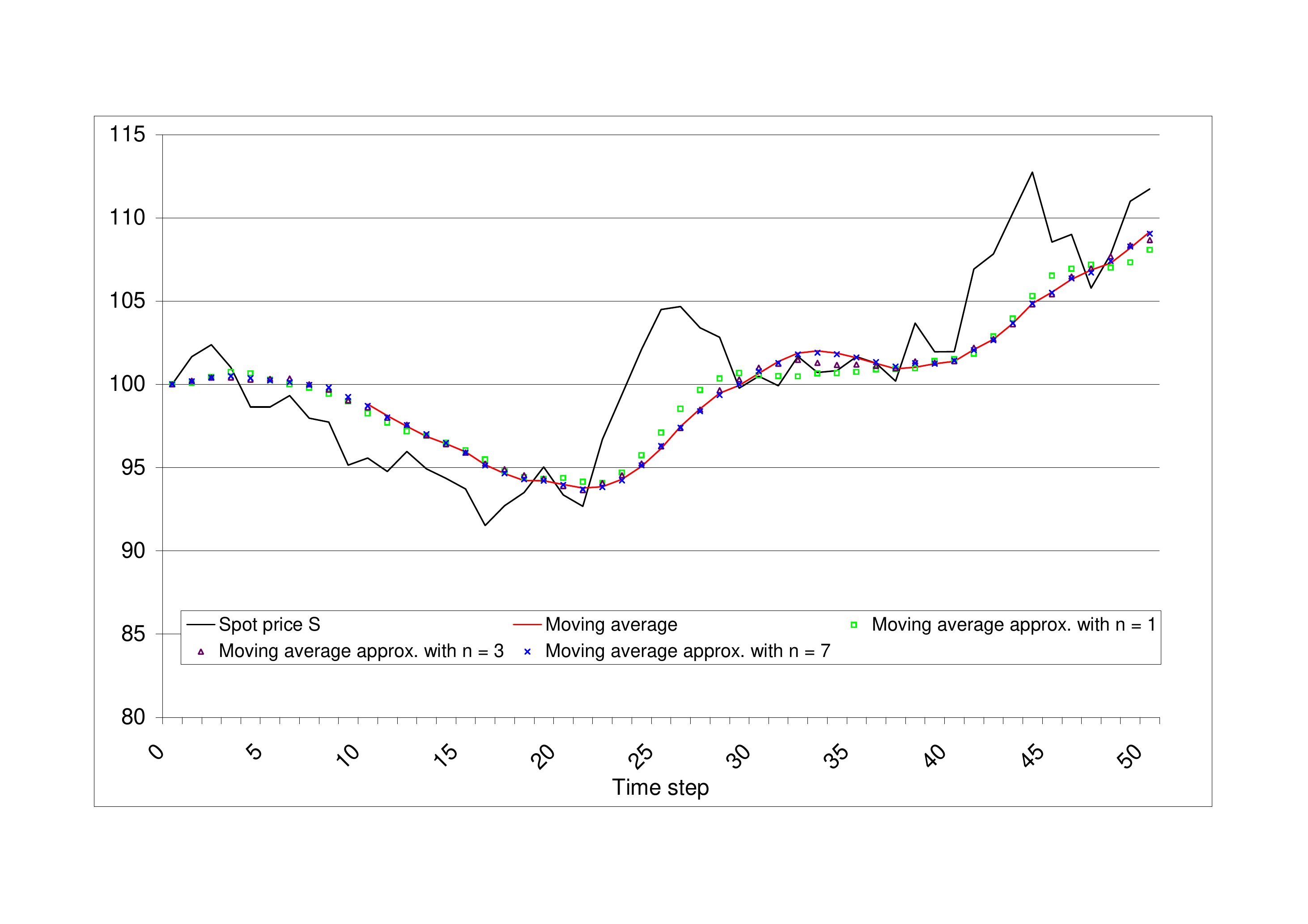}{Simulated trajectory of the moving average process and its
Laguerre approximations.}{simu-opti-MM}{-1}{-0.8}

%
\paragraph{Moving average American options: 
		Laguerre approximation}
\label{application-BS-simple}

Figure \ref{simu-opti-MM}
shows a simulated trajectory of the underlying price $S^{\pi}$, its moving average
$X^{\pi}$ with $\delta = 0.04$ and the corresponding Laguerre-based moving average
approximation $M^{n, p_{\opt}, \pi}$ with $n = 1, 3$ and $7$ Laguerre
basis functions. Already for $n \geq 3$ Laguerre basis functions,
$M^{n, p_{\opt}, \pi}$ accurately mimics the exact moving average
dynamics of $X^{\pi}$ and this approximation seems to be almost exact
when $n = 7$.

Table \ref{tab-comp-LS-Lag-opti} reports the prices of moving average call options computed using the Laguerre-based method presented in Section \ref{sec-method-num} (Lag-LS) and its improved version (Lag-LS*) (with the same parameters as above, in particular $\delta = 0.04$). The price values are means over 5 valuations, the relative standard deviation is given in brackets and we used $M = 5$ million Monte Carlo paths for $n = 1, \ldots, 3$ Laguerre functions and $M = 10$ million Monte Carlo paths for $n = 4, \ldots, 7$ Laguerre functions, with $b^S = 4$ and $b^X_k = 1, \forall k \geq 0$.
With $M = 10$ million Monte Carlo paths, $b^S = 4$ and $b^X = 1$, (NM-LS) gives an option value equal to $4.268$. \\

\UnTableau{
\begin{tabular}{|c|c|c|}
	\hline
	$n$ & (Lag-LS*) & (Lag-LS) \\
	\hline
	 1 & 4.266	(0.020 \%) & 4.092	(0.017 \%) \\
	 2 & 4.273	(0.022 \%) & 4.302	(0.019 \%) \\
	 3 & 4.276	(0.023 \%) & 4.182	(0.018 \%) \\
	 4 & 4.276	(0.022 \%) & 4.227	(0.020 \%) \\
	 5 & 4.277	(0.023 \%) & 4.275	(0.020 \%) \\
	 6 & 4.277	(0.024 \%) & 4.287	(0.022 \%) \\
	 7 & 4.277	(0.024 \%) & 4.258	(0.022 \%) \\
	\hline
	\end{tabular}
}{Moving average options pricing with (Lag-LS) and (Lag-LS*).}{tab-comp-LS-Lag-opti}{0}{0.2}
\begin{rem}
When the averaging window is large, the variance of the Laguerre states $(X^{p, k}_{k})_{k \geq 0}$ is small, and at least much smaller than the variance of the price $S$.
In consequence, increasing the numbers $b^X_k$ of basis functions in
the directions of these states does not have a strong impact on the conditional expectation estimators and the resulting option price. On the contrary, the number $b^S$ of basis functions in the direction of the spot price $S$ should be sufficiently large.
\end{rem}

Whereas (Lag-LS) oscillates as $n$ increases (this is due to the non
monotone approximation of the moving average $X$ by $M^{n,
  p_{\opt}}$), (Lag-LS*) shows a monotone convergence when increasing
$n$, as shown in Figure \ref{valo-LS-Lag-linear-delta10}. The limiting
value (almost $4.277$) is around $0.2 \%$ above the value computed by
the practitioner's approximation (NM-LS). \\

%
%
\UneFigure{11}{11}{0}{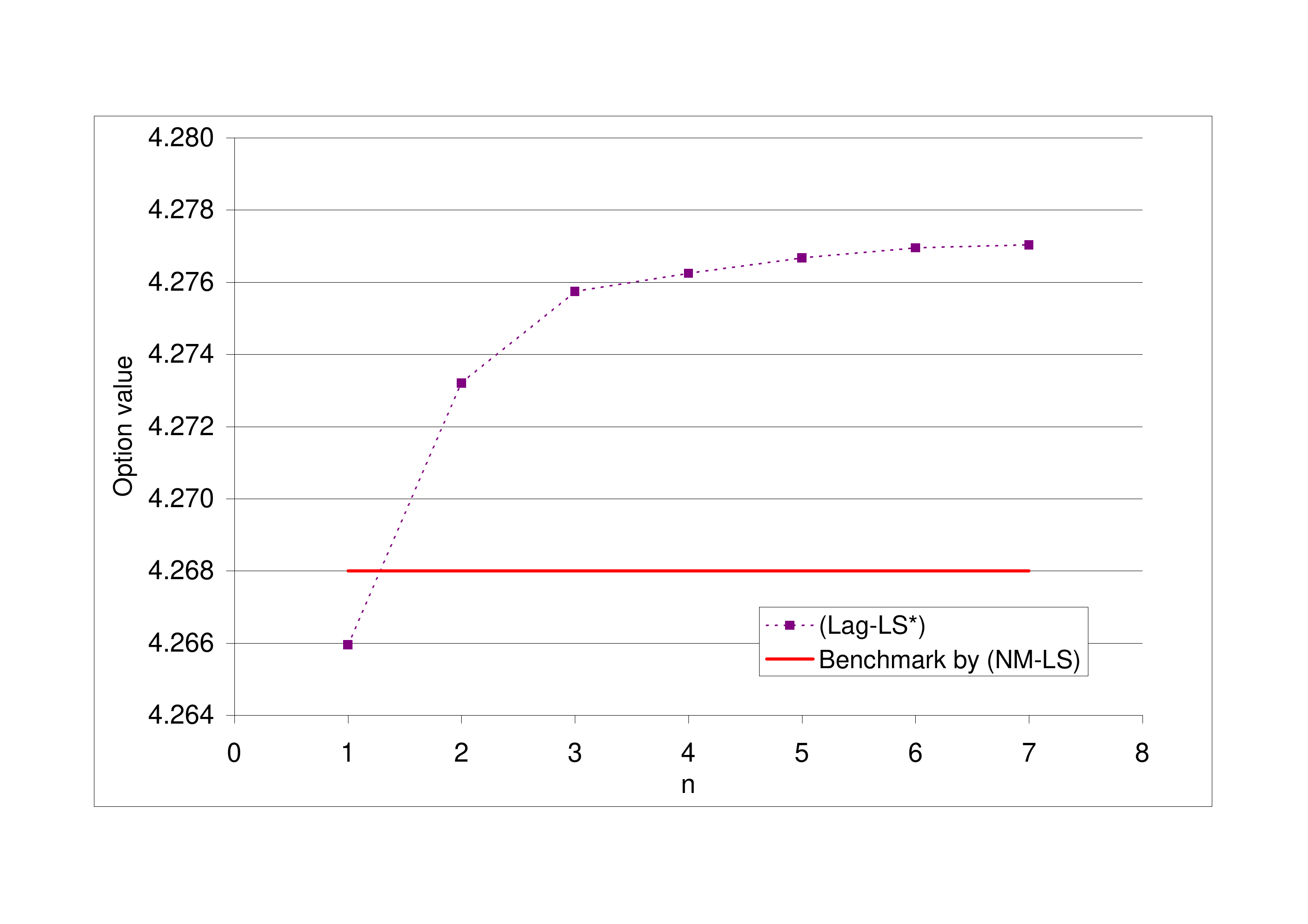}{Convergence of the improved Laguerre-based approximation.}{valo-LS-Lag-linear-delta10}{-1}{-0.8}

%

Figure \ref{valo-LS-Lag-LS-NM-call-delta50} presents the prices of
moving average call options computed by (Lag-LS*) and (NM-LS) when varying
$\delta$ from $0$ to $T$ with the same parameters as above. $7$
Laguerre basis functions were used with method (Lag-LS) as soon as
$N_{\delta} \geq 8$. For smaller $N_{\delta}$, we take $n = N_{\delta}
- 1$: $n$ must satisfy the condition $n \leq N_{\delta} - 1$ because otherwise the estimation of the conditional expectation at time $t_{N_{\delta}}$ leads to a degenerate linear system.
In the limit case when $\delta = T$, we retrieve the price of the
Asian option with payoff $\left(S_T - \frac{1}{T}\int_0^T S_t
  dt\right)^{+}$. For large averaging periods, the price that we obtain with $7$ Laguerre functions is about $0.30 \%$ above the benchmark value given by the (suboptimal) non Markovian approximation. \\
%


\UneFigure{14}{13}{0}{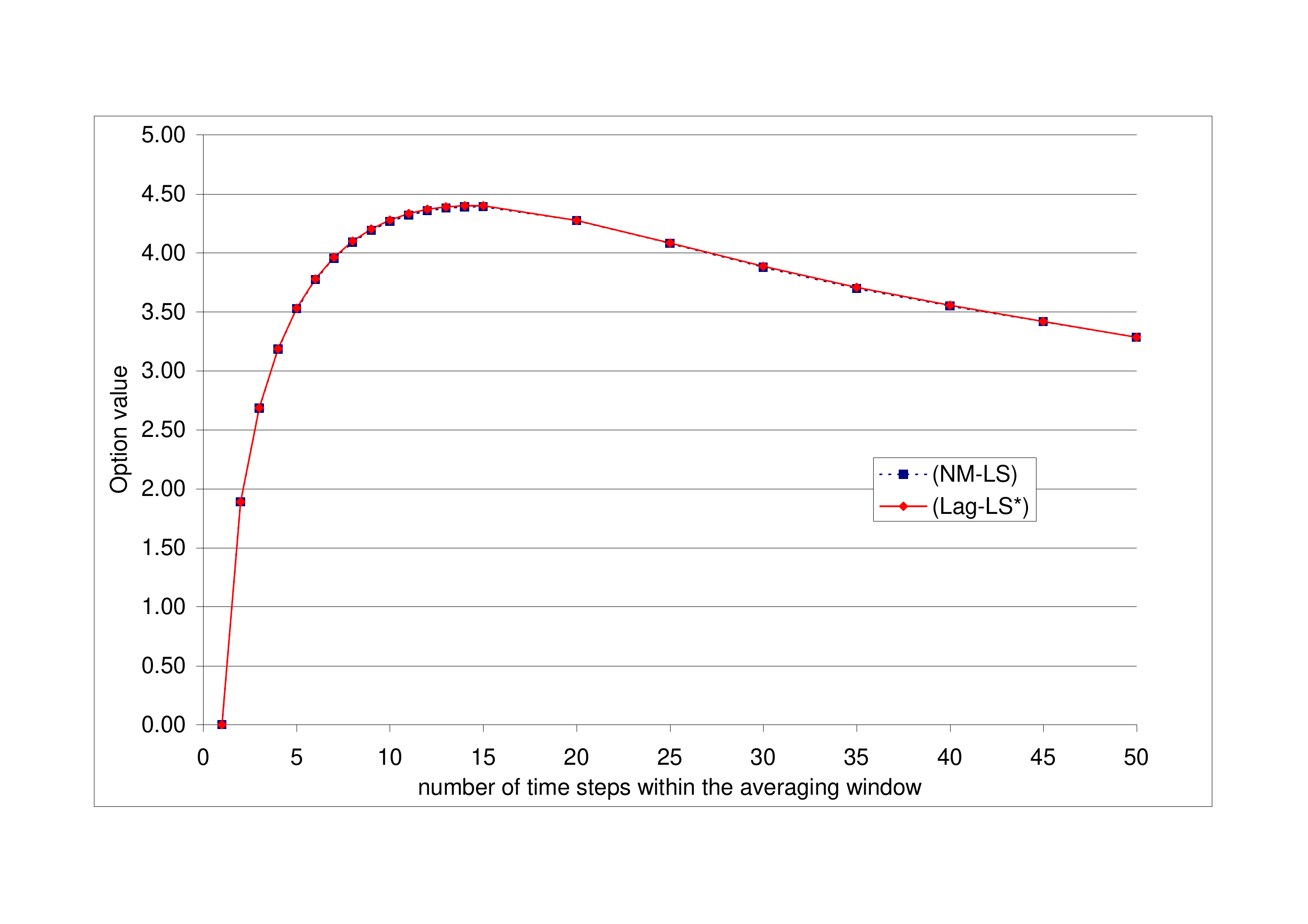}{Moving average option price as function of the averaging window $\delta$.}{valo-LS-Lag-LS-NM-call-delta50}{-1}{-0.8}


\paragraph{Moving average American options with time delay}

Consider now a moving average American option with time delay $l \geq 0$ whose value at time $0$ is:
\begin{eqnarray*}
\sup_{\tau \in \Tcurs_{[\delta + l, T]}} \E \left[ \phi \left(
    S_{\tau}, X_\tau \right) \right],\quad X_\tau = \frac{1}{\delta} \int_{\tau - l - \delta}^{\tau - l} S_{u} du.
\end{eqnarray*}
With the same option characteristics and parameters as above and an averaging period equal to $\delta = 0.02$ (number of time steps $N_{\delta} = 5$), Figure \ref{valo-LS-Lag-call-delta5-lag45} presents the prices of delayed moving average call options computed by (Lag-LS*) and (NM-LS) when varying $l$ from $0$ to $T - \delta$. In the limit case when $l = T - \delta$ we retrieve the price of the Asian option with payoff $\left(S_T - \frac{1}{T - l}\int_0^{T-l} S_t dt\right)^{+}$. 

The relative difference between the option values given by (Lag-LS*) and (NM-LS) is significant (bigger than $5 \%$) for time lags such that $l \in [0.04, 0.152]$ (corresponding to $10 \leq N_{l} \leq 38$).
For example, when $l = 0.1$ (corresponding to $N_l = 25$ time steps), the relative difference is around $11 \%$. \\
%

\UneFigure{13}{13}{0}{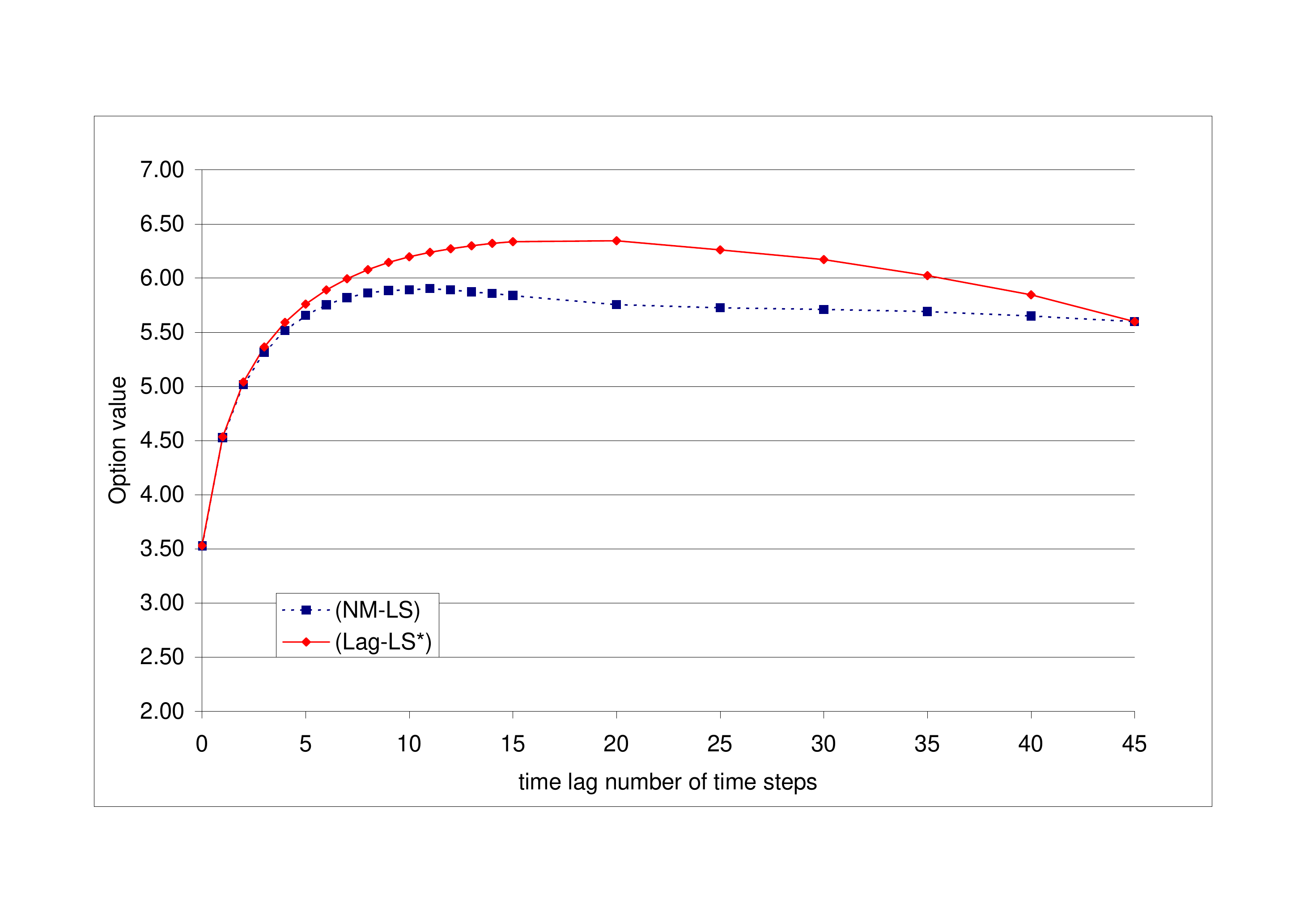}{Price of the moving average option with time delay as function of the lag $l$.}{valo-LS-Lag-call-delta5-lag45}{-1}{-0.8}
Now fix $l = 0.08$ ($N_l = 20$). As shown in Figure \ref{valo-LS-Lag-call-lag20}, when the averaging window increases this relative difference decreases. But it is still around $5 \%$ when $N_{\delta} = 15$. \\


\UneFigure{13}{13}{0}{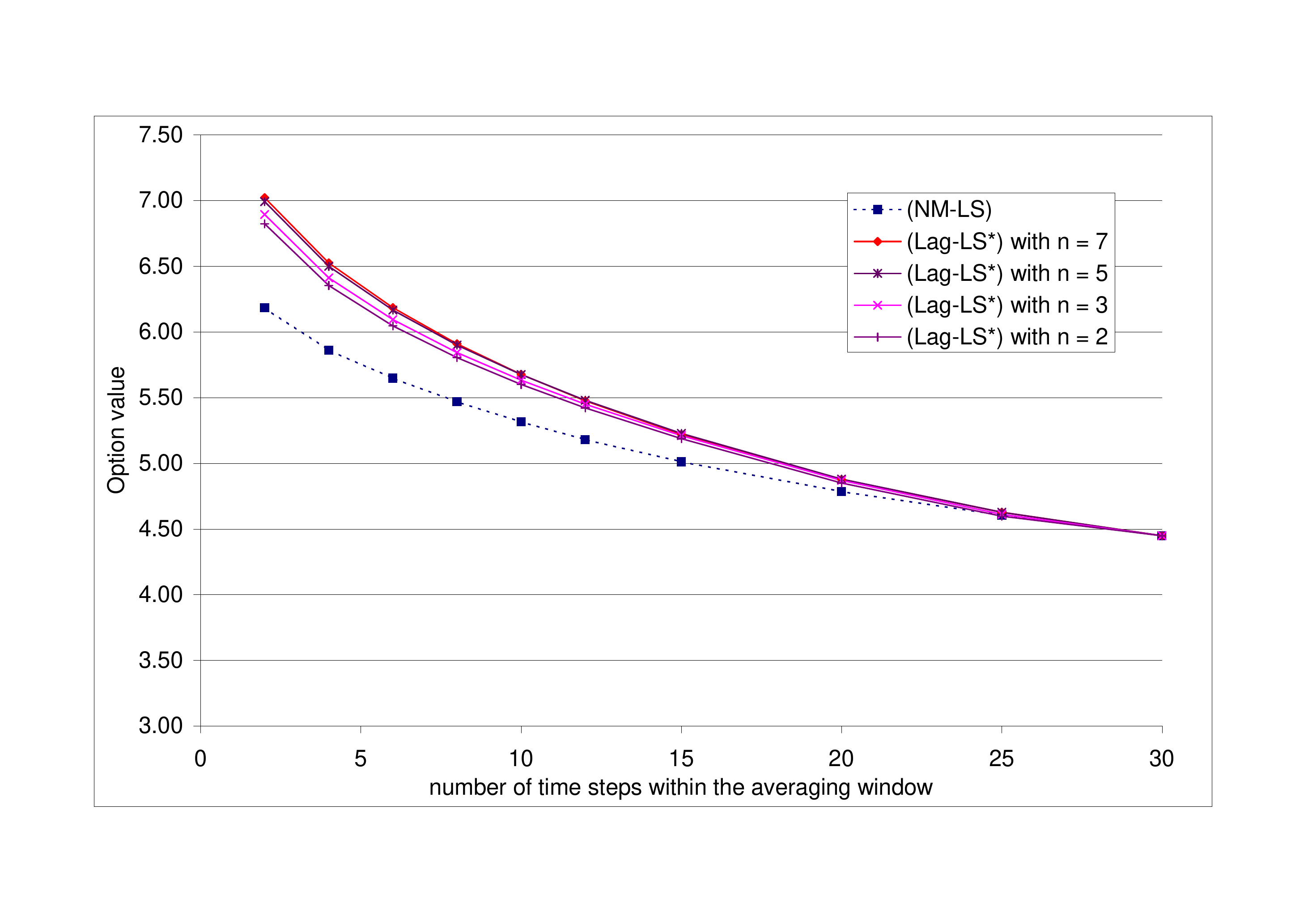}{Price of the moving average option with time delay\\as function of the averaging window $\delta$. }{valo-LS-Lag-call-lag20}{-1}{-0.8}

As expected, for pricing delayed moving average options, the non
Markovian approximate method (NM-LS) seems to be a worse approximation
than in the case without time lag: the error increases with the time
lag but decreases with the length of the the averaging period.
For large time lags and relatively small averaging period, our
improved Laguerre-based method (Lag-LS*) gives option prices up to $11
\%$ above the benchmark value given by the suboptimal approximation
(NM-LS) (cf. $N_l = 25$ on Figure
\ref{valo-LS-Lag-call-delta5-lag45}). For a good accuracy of
(Lag-LS*), the required number of Laguerre functions is however bigger
than in the case without time lag as explained at the end of Section \ref{sec-part-theo}.

%
\paragraph{Pricing of moving average Bermudean options: 
		a convergence rate improvement}
To compute the exact price of a moving average Bermudan option, one
can either use the classical method (M-LS) taking a sufficient number
of steps within the averaging window or use the Laguerre-based method
with sufficient number of state processes. In this last example we
illustrate the fact that our method (Lag-LS*) converges much faster than classical method (M-LS) with respect to the state dimension for pricing the same Bermudan option. 

Let us consider a moving average call option with maturity $T = 0.5$ and moving window $\delta = 0.1$.
Figure \ref{comp-CV-rate} provides a comparison between price values given by (Lag-LS*) for a time step $\Delta t = \frac{1}{80}$ when varying the number of Laguerre functions from $1$ to $7$ and by (M-LS) when varying the number of time steps within the averaging period from $2$ to $8$, that is $\Delta t = \frac{1}{20}, \frac{1}{30}, \ldots, \frac{1}{80}$ (the state dimension varies in both cases from $2$ to $8$). 
$M = 20$ million of Monte Carlo paths were used in both cases and $b^S=2$, $b^X=1$. \\

\UneFigure{15}{15}{0}{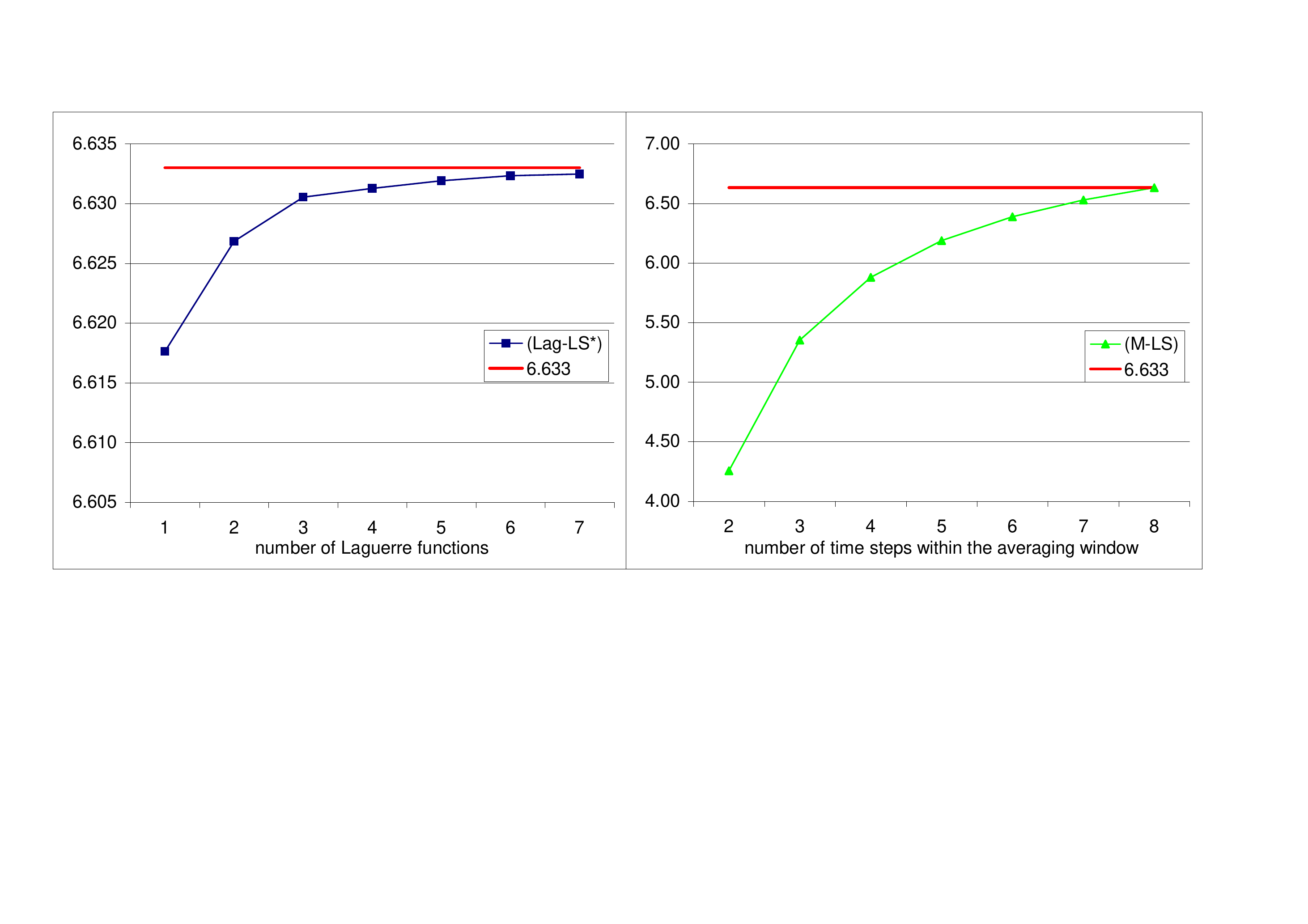}{Convergence of
  the improved Laguerre-based approximation and the benchmark method
					for pricing a Bermudan option.}{comp-CV-rate}{-1.2}{-3.5}


\appendix
\section{Appendix}


\begin{lem} The Laguerre polynomials $(P_{k})_{k \geq 0}$ belong to $\Ccurs^{\infty}\left( [0, +\infty ) \right)$ and:
 \begin{enumerate}[(i)]
 	\item \label{prop-Lag-a} $\forall k \geq 1, t P'_k(t) - k P_k(t) + k P_{k-1}(t) = 0$
 	\item \label{prop-Lag-b} $\forall k \geq 1, \frac{k}{t} \left( P_k(t) - P_{k-1}(t) \right) = - \sum^{k-1}_{i = 0} P_i (t)$
 \end{enumerate}
 \label{lem-lag}
 \end{lem}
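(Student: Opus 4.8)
The plan is to dispose of the three assertions in turn, all by direct manipulation of the explicit representation \eqref{def-explicite-laguerre}, in which I write $\binom{k}{k-i}=\binom{k}{i}$ so that $P_k(t)=\sum_{i=0}^{k}\binom{k}{i}\frac{(-t)^i}{i!}$. Smoothness is immediate: by \eqref{def-explicite-laguerre} each $P_k$ is a polynomial of degree $k$, hence belongs to $\Ccurs^\infty([0,+\infty))$.

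For the identity (i), I would differentiate the explicit formula term by term, which gives $tP'_k(t)=\sum_{i=1}^{k}i\binom{k}{i}\frac{(-t)^i}{i!}$. Reading off the coefficient of $\frac{(-t)^i}{i!}$ for $0\le i\le k$ in $tP'_k-kP_k+kP_{k-1}$ (with the convention $\binom{k-1}{k}=0$) yields $i\binom{k}{i}-k\binom{k}{i}+k\binom{k-1}{i}=-(k-i)\binom{k}{i}+k\binom{k-1}{i}$, which vanishes by the absorption identity $(k-i)\binom{k}{i}=k\binom{k-1}{i}$. This proves (i).

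For (ii), I would first note that Pascal's rule $\binom{k}{i}-\binom{k-1}{i}=\binom{k-1}{i-1}$ gives $P_k(t)-P_{k-1}(t)=\sum_{i=1}^{k}\binom{k-1}{i-1}\frac{(-t)^i}{i!}$; in particular this difference has no constant term (consistent with $P_k(0)=P_{k-1}(0)=1$), so $t\mapsto \frac1t(P_k(t)-P_{k-1}(t))$ extends to a polynomial on $[0,+\infty)$, and by (i) it equals $\frac1kP'_k$. Hence (ii) is equivalent to the identity $P'_k=-\sum_{i=0}^{k-1}P_i$, which I would obtain by first proving $P'_{k+1}-P'_k=-P_k$: using the formula for $P_{k+1}-P_k$ just derived, differentiating and re-indexing gives $P'_{k+1}(t)-P'_k(t)=-\sum_{j=0}^{k}\binom{k}{j}\frac{(-t)^j}{j!}=-P_k(t)$; telescoping from $P'_0=0$ then yields $P'_k=-\sum_{i=0}^{k-1}P_i$, i.e.\ (ii). As an alternative, (ii) can be proved in one step by summing \eqref{def-explicite-laguerre} over $i<k$ and invoking the hockey stick identity $\sum_{i=j}^{k-1}\binom{i}{j}=\binom{k}{j+1}$ together with $\frac{k}{j+1}\binom{k-1}{j}=\binom{k}{j+1}$.

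All the computations here are elementary, so I do not anticipate a genuine obstacle; the only points deserving a moment's attention are the bookkeeping of the index shifts and the observation that $(P_k-P_{k-1})/t$ is genuinely a polynomial, which is what makes dividing by $t$ in (ii) legitimate, including at $t=0$ (where both sides of (ii) equal $-k$). If one preferred to avoid the explicit formula, (i) could instead be verified by a two-step induction on $k$ from the recursion \eqref{prop-rec-laguerre}, but the direct term-by-term computation above is shorter and also supplies the intermediate relation $P'_{k+1}-P'_k=-P_k$ needed for (ii).
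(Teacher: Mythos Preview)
Your proof is correct and fully self-contained. It differs from the paper's own treatment, which is essentially a reference: the paper cites Szeg\H{o} for \eqref{prop-Lag-a} and merely states that \eqref{prop-Lag-b} is a consequence of the three-term recurrence \eqref{prop-rec-laguerre}. Concretely, the recurrence route for \eqref{prop-Lag-b} amounts to rewriting $(k+1)P_{k+1}=(2k+1-t)P_k-kP_{k-1}$ as $\frac{k+1}{t}(P_{k+1}-P_k)=\frac{k}{t}(P_k-P_{k-1})-P_k$ and telescoping from $k=1$. You instead work directly from the explicit formula \eqref{def-explicite-laguerre}: a binomial absorption identity gives \eqref{prop-Lag-a}, and then you reduce \eqref{prop-Lag-b} via \eqref{prop-Lag-a} to $P'_k=-\sum_{i<k}P_i$, which you obtain from $P'_{k+1}-P'_k=-P_k$. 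Your approach has the advantage of being explicit and not relying on an external reference; the paper's approach is shorter on the page but leaves the computation to the reader (or to Szeg\H{o}). Your remark that $(P_k-P_{k-1})/t$ is a genuine polynomial, so that \eqref{prop-Lag-b} holds at $t=0$ with both sides equal to $-k$, is a nice point the paper does not make.
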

\begin{proof} \eqref{prop-Lag-a} can be found for example in Szegö \cite{Sze59} and \eqref{prop-Lag-b} is a consequence of \eqref{prop-rec-laguerre}.
\end{proof}

\begin{lem}\label{deriv.lm}
The definite integrals and derivatives of Laguerre functions can be
computed using the following formulas:
\begin{align}
\int_t^\infty e^{-s/2} P_n(s) ds &= 2e^{-t/2} P_n(t) + 4 e^{-t/2
}\sum_{k=1}^n (-1)^k P_{n-k} (t).  \\
\left(e^{-t/2} P_n(t)\right)' &= -\sum_{k=0}^{n-1} e^{-t/2} P_k(t) -
\frac{1}{2}e^{-t/2} P_n(t).
\end{align}
\end{lem}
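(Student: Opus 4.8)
The plan is to prove the two displayed identities in turn, starting with the derivative formula, since the integral formula will be obtained by integrating it.

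\textbf{Derivative formula.} Since $\left(e^{-t/2}P_n(t)\right)' = -\tfrac12 e^{-t/2}P_n(t) + e^{-t/2}P_n'(t)$, everything reduces to showing $P_n'(t) = -\sum_{k=0}^{n-1}P_k(t)$. This is exactly what the combination of the two parts of Lemma \ref{lem-lag} delivers: part \eqref{prop-Lag-a} gives $t P_n'(t) = n\bigl(P_n(t)-P_{n-1}(t)\bigr)$, while part \eqref{prop-Lag-b} rewrites $n\bigl(P_n(t)-P_{n-1}(t)\bigr) = -t\sum_{k=0}^{n-1}P_k(t)$; dividing by $t$ (for $t\neq 0$, then extending to $t=0$ by continuity since both sides are polynomials) yields $P_n' = -\sum_{k=0}^{n-1}P_k$. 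Substituting into the product rule gives the second identity; the case $n=0$ is trivial (empty sum).

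\textbf{Integral formula.} I would deduce it from the derivative formula. Set $Q_n(t) := \int_t^\infty e^{-s/2}P_n(s)\,ds$, which is finite and satisfies $Q_n(t)\to 0$ as $t\to\infty$, the integrand being a polynomial times $e^{-s/2}$. Integrating the derivative identity over $[t,\infty)$ and using that the boundary term $e^{-s/2}P_n(s)$ vanishes at $+\infty$ gives the recursion
\[
Q_n(t) = 2e^{-t/2}P_n(t) - 2\sum_{k=0}^{n-1}Q_k(t),\qquad Q_0(t)=2e^{-t/2}.
\]
One then checks the claimed closed form $Q_n(t)=2e^{-t/2}P_n(t)+4e^{-t/2}\sum_{k=1}^n(-1)^kP_{n-k}(t)$ by strong induction on $n$: after reindexing ($j=n-k$) this amounts to verifying $\sum_{m=0}^{n-1}Q_m(t) = 2e^{-t/2}\sum_{j=0}^{n-1}(-1)^{n-1-j}P_j(t)$, and plugging in the induction hypothesis for $Q_0,\dots,Q_{n-1}$ turns the double sum $\sum_{m=0}^{n-1}\sum_{j=0}^{m-1}(-1)^{m-j}P_j(t) = \sum_{j=0}^{n-2}P_j(t)\sum_{m=j+1}^{n-1}(-1)^{m-j}$ into something the inner alternating sum $\sum_{l=1}^{m}(-1)^l=\tfrac12((-1)^m-1)$ collapses via telescoping. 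Alternatively, and perhaps more cleanly, one can skip the recursion: define $R_n(t)$ to be the claimed right-hand side, differentiate using $P_k'=-\sum_{i<k}P_i$ to obtain $R_n'(t)=-e^{-t/2}P_n(t)$ directly, note $R_n(t)\to 0$ at $+\infty$, and conclude $R_n(t)=\int_t^\infty e^{-s/2}P_n(s)\,ds$ by the fundamental theorem of calculus.

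The only genuine work, and the step I expect to be the main obstacle, is the bookkeeping of the alternating double sum of the $P_j$'s: keeping signs and index ranges straight when interchanging the order of summation and showing that the resulting coefficients produce exactly the stated telescoping. Everything else — the product rule, the vanishing of $e^{-t/2}P_n(t)$ at infinity, and the base case $n=0$ where $\int_t^\infty e^{-s/2}\,ds = 2e^{-t/2}=2e^{-t/2}P_0(t)$ — is routine.
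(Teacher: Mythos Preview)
Your argument is correct. Both identities fall out of Lemma~\ref{lem-lag} exactly as you describe: combining parts \eqref{prop-Lag-a} and \eqref{prop-Lag-b} gives $P_n' = -\sum_{k<n} P_k$, from which the derivative formula is immediate, and then either of your two routes to the integral formula works (the ``differentiate the right-hand side'' version is indeed cleaner and avoids the double-sum bookkeeping entirely, though that bookkeeping is also fine as you sketched it).

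Your approach differs from the paper's, which instead invokes the contour integral (equivalently, generating function) representation $P_n(t) = \frac{1}{2\pi i}\oint \frac{e^{-ts/(1-s)}}{(1-s)s^{n+1}}\,ds$ and extracts both formulas from it ``after some computations.'' Your method is more elementary and entirely self-contained: it uses only the recurrences already recorded in Lemma~\ref{lem-lag}, so nothing external is needed. The generating-function route, by contrast, handles all $n$ simultaneously and may be faster for someone fluent with such manipulations, but the details are suppressed in the paper. In short, you have given a complete, internally referenced proof where the paper gives a pointer to a standard technique.
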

\begin{proof}
This follows, after some computations, from the contour integral
representation of Laguerre polynomials:
$$
P_n(t) = \frac{1}{2\pi i} \oint \frac{e^{-\frac{ts}{1-s}}}{(1-s)s^{n+1}}ds.
$$
\end{proof}

\begin{lem}\label{integrals.lm}
The Laguerre functions and their integrals admit the following
representation in terms of Bessel functions:
\begin{align}
e^{-x/2}P_n(x) &= \sum_{k=0}^\infty A_k
\left(\frac{x}{\nu}\right)^{k/2} J_k(\sqrt{\nu x}),\quad \nu = 4n+2\\
I_n(x) := \int_0^x e^{-x'/2}P_n(x') dx' &=  2\sum_{k=0}^\infty A_k
\left(\frac{x}{\nu}\right)^{(k+1)/2} J_{k+1}(\sqrt{\nu x})\\
\int_0^x I_n(x') dx'&= 4\sum_{k=0}^\infty A_k
\left(\frac{x}{\nu}\right)^{(k+2)/2} J_{k+2}(\sqrt{\nu x}),
\end{align}
where $A_0=1$, $A_1=0$, $A_2 = \frac{1}{2}$ and other $A_i$-s satisfy the equation
$(m+2)A_{m+2} = (m+1)A_m - \frac{\nu}{2}A_{m-1}$. The series converge uniformly in $x$ on any compact interval.
\end{lem}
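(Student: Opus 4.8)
The plan is to recognise that $f_n(x):=e^{-x/2}P_n(x)$ is, up to a constant, the \emph{unique} solution analytic at the origin of a second order linear ODE, and that the right‑hand side of the first identity is manifestly another analytic solution of that same ODE with the same value at $0$; the two integrated identities will then drop out by integrating term by term.

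\textbf{The ODE and uniqueness.} First I would substitute $P_n=e^{x/2}f_n$ into the Laguerre equation $xP_n''+(1-x)P_n'+nP_n=0$; a short computation yields
\[
x f_n'' + f_n' + \Big(\tfrac{\nu}{4}-\tfrac{x}{4}\Big) f_n = 0,\qquad \nu=4n+2,
\]
together with $f_n(0)=P_n(0)=1$. Writing $L$ for the operator on the left, I would note that $x=0$ is a regular singular point with double indicial root $0$, and that an elementary coefficient comparison shows any power series $\sum_{m\ge1}a_mx^m$ killed by $L$ is identically zero; hence two solutions of $Lf=0$ analytic at $0$ with the same value at $0$ must coincide, so $f_n$ is pinned down by the three properties ``$Lf_n=0$, $f_n$ entire, $f_n(0)=1$''.

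\textbf{The candidate series.} Put $w_k(x):=(x/\nu)^{k/2}J_k(\sqrt{\nu x})$. From the power series of $J_k$ one reads off that $w_k$ is entire in $x$ with $w_k(x)=\tfrac{x^k}{2^kk!}+O(x^{k+1})$ and $|w_k(x)|\le\tfrac{(x/2)^k}{k!}e^{\nu x/4}$ on $[0,\infty)$. The Bessel relations $\tfrac{d}{dz}(z^kJ_k)=z^kJ_{k-1}$ and $\tfrac{d}{dz}(z^{-k}J_k)=-z^{-k}J_{k+1}$, rewritten in $x$ via $z=\sqrt{\nu x}$, become
\[
w_k'=\tfrac12 w_{k-1}\ (k\ge1),\qquad x\,w_k'=k\,w_k-\tfrac{\nu}{2}w_{k+1}\ (k\ge0),
\]
hence $\nu w_{k+1}=2kw_k-xw_{k-1}$ for $k\ge1$ and $xw_k=2(k+1)w_{k+1}-\nu w_{k+2}$ for $k\ge0$. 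Feeding these into $L$ I expect the single clean formula $Lw_k=\tfrac{k}{2}w_{k-1}-\tfrac{x}{4}w_k$ valid for all $k\ge0$ (with $w_{-1}:=0$). Then for $g:=\sum_{k\ge0}A_kw_k$, substituting $xw_k=2(k+1)w_{k+1}-\nu w_{k+2}$ and reindexing gives $Lg=\sum_{j\ge0}c_jw_j$ with $c_0=\tfrac12A_1$, $c_1=A_2-\tfrac12A_0$ and $c_j=\tfrac{j+1}{2}A_{j+1}-\tfrac{j}{2}A_{j-1}+\tfrac{\nu}{4}A_{j-2}$ for $j\ge2$; I would then check that the prescribed $A_0=1,\ A_1=0,\ A_2=\tfrac12$ and the recursion $(m+2)A_{m+2}=(m+1)A_m-\tfrac{\nu}{2}A_{m-1}$ are exactly the conditions $c_j=0$ for every $j$. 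Since the leading terms of the $w_j$ form a ``staircase'' ($w_j\sim x^j/(2^jj!)$), the relation $\sum_jc_jw_j\equiv0$ is equivalent to $c_j=0$ for all $j$, so $Lg=0$. The recursion also forces a geometric bound $|A_k|\le C^k$, which with the estimate on $w_k$ (and $w_k^{(j)}=2^{-j}w_{k-j}$ for $k\ge j$) makes $g$ and all its term‑by‑term derivatives locally uniformly convergent, so $g$ is entire; and $g(0)=A_0=1$ since $w_k(0)=0$ for $k\ge1$. By the uniqueness of the first step, $g=f_n$, which is the first identity.

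\textbf{Integration and main obstacle.} Because $w_{k+1}'=\tfrac12w_k$ and $w_{k+1}(0)=0$ one has $\int_0^xw_k=2w_{k+1}$; integrating the first identity term by term (legitimate by local uniform convergence) gives $I_n(x)=2\sum_{k\ge0}A_kw_{k+1}(x)$, and integrating once more gives $\int_0^xI_n=4\sum_{k\ge0}A_kw_{k+2}(x)$ — the second and third formulas. The two places that need genuine care are the algebraic bookkeeping in the middle step — establishing $Lw_k=\tfrac{k}{2}w_{k-1}-\tfrac{x}{4}w_k$ and then the reindexing that must reproduce \emph{exactly} the stated three‑term recursion — and the geometric growth estimate on the $A_k$ underpinning all the convergence and term‑by‑term manipulations; the ODE reduction, the uniqueness, and the final integrations are routine.
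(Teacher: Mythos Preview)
Your argument is correct, but it is considerably more elaborate than what the paper actually does. The paper's proof simply cites the first identity $e^{-x/2}P_n(x)=\sum_k A_k(x/\nu)^{k/2}J_k(\sqrt{\nu x})$ as a known expansion from Erd\'elyi et al.\ \cite{erdelyi.al.53}, and then obtains the two integrated identities from the Bessel relation $\int_0^1 t^{\mu+1}J_\mu(at)\,dt=a^{-1}J_{\mu+1}(a)$, which in your notation is precisely the statement $w_{k+1}'=\tfrac12 w_k$ with $w_{k+1}(0)=0$.

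So the integration step is essentially the same in both proofs; the difference lies entirely in the first formula. You give a self-contained derivation: characterise $f_n=e^{-x/2}P_n$ as the unique entire solution of $xf''+f'+(\tfrac{\nu}{4}-\tfrac{x}{4})f=0$ with $f(0)=1$, compute $Lw_k=\tfrac{k}{2}w_{k-1}-\tfrac{x}{4}w_k$, and read off the recursion for the $A_k$ as exactly the condition $Lg=0$. This has the merit of explaining \emph{why} the coefficients satisfy the specific three-term recursion in the statement, and it removes the dependence on an external reference. The paper's route buys brevity at the cost of that transparency. Your convergence control via $|w_k(x)|\le \tfrac{(x/2)^k}{k!}e^{\nu x/4}$ and a geometric bound on $|A_k|$ (which the recursion does yield) is adequate for the termwise differentiation and integration; that is the one point where a reader might want to see the estimate spelled out rather than asserted.
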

\begin{proof}
The first formula is from \cite{erdelyi.al.53}. The other two follow readily
using the integration formula for Bessel functions:
$$
\int_0^1 x^{\nu+1} J_\nu(ax)dx = a^{-1} J_{\nu+1}(a).
$$
\end{proof}

\begin{lem}\label{coefs.lm}
Let $\mu$ be a finite signed measure on $[0,\infty)$, with bounded
support which does not contain zero. Let $\{c_n\}$ denote the Laguerre
coefficients of the function $h(x):= \mu([x,\infty))$ and $\{A_n\}$
denote the Laguerre coefficients of the function $H(x):= \int_x^\infty
h(t)dt$. Then
$$
c_n = \Ocurs(n^{-3/4})\quad \text{and}\quad A_n = \Ocurs(n^{-5/4}).
$$
In addition, for $x>0$ fixed, $e^{-x/2}P_n(x) = \Ocurs(n^{-1/4})$.
\end{lem}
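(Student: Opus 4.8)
The plan is to reduce both estimates to uniform bounds on the first two iterated primitives of the single Laguerre function $e^{-t/2}P_n$, for which the Bessel representations of Lemma \ref{integrals.lm} give the sharp rates. First I would dispose of the scale parameter: substituting $s=2pt$ in the defining integral $\langle H,L^p_k\rangle$ shows that, up to the factor $(2p)^{-1/2}$, $A^p_k$ is the $k$-th Laguerre coefficient at scale $1/2$ of $\widetilde H(\cdot):=H(\cdot/2p)$, while $\widetilde H$ and $\widetilde h(\cdot):=h(\cdot/2p)$ are the functions attached in the same way to the image of $\mu$ under $x\mapsto2px$, which again has bounded support not containing $0$. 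Hence it suffices to treat $p=1/2$, where $L_k(t)=e^{-t/2}P_k(t)$ and Lemmas \ref{deriv.lm}--\ref{integrals.lm} apply verbatim.

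Next I would record the structure of $h$ and $H$. Writing $\supp\mu\subseteq[a,b]$ with $0<a\le b<\infty$ and $m_0:=\mu(\mathbb R_+)$, one has $h\equiv m_0$ on $[0,a]$, $h$ of bounded variation on $[a,b]$ with total variation dominated by $|\mu|([a,b])$, and $h\equiv0$ on $(b,\infty)$; while $H(t)=\int_t^\infty h=\int_{[0,\infty)}(s-t)^+\mu(ds)$ is affine on $[0,a]$ and vanishes on $[b,\infty)$. With $I_n(x):=\int_0^x e^{-t/2}P_n(t)\,dt$ as in Lemma \ref{integrals.lm} and $Q_n(x):=\int_0^x I_n(t)\,dt$, and using $\int_0^s(s-t)e^{-t/2}P_n(t)\,dt=Q_n(s)$, Fubini's theorem (legitimate since $|\mu|$ is finite) gives
$$
c_n=\int_0^\infty\mu([t,\infty))\,L_n(t)\,dt=\int_{[a,b]}I_n(s)\,\mu(ds),\qquad A_n=\int_0^\infty H(t)\,L_n(t)\,dt=\int_{[a,b]}Q_n(s)\,\mu(ds),
$$
so that $|c_n|\le|\mu|([a,b])\,\|I_n\|_{L^\infty[a,b]}$ and $|A_n|\le|\mu|([a,b])\,\|Q_n\|_{L^\infty[a,b]}$; it remains to prove $\|I_n\|_{L^\infty[a,b]}=\Ocurs(n^{-3/4})$ and $\|Q_n\|_{L^\infty[a,b]}=\Ocurs(n^{-5/4})$.

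For these I would invoke Lemma \ref{integrals.lm}: with $\nu=4n+2$,
$$
I_n(x)=2\sum_{k\ge0}A_k\Big(\tfrac{x}{\nu}\Big)^{(k+1)/2}J_{k+1}(\sqrt{\nu x}),\qquad Q_n(x)=4\sum_{k\ge0}A_k\Big(\tfrac{x}{\nu}\Big)^{(k+2)/2}J_{k+2}(\sqrt{\nu x}).
$$
By the classical Bessel bounds $|J_m(z)|\le1$ and $|J_m(z)|\le Cz^{-1/2}$ for $z\ge1$, the leading ($k=0$) term of the first series is, for $x\in[a,b]$ and $n$ large, of order $(x/\nu)^{1/2}(\nu x)^{-1/4}=x^{1/4}\nu^{-3/4}$ uniformly on $[a,b]$, and that of the second is $\Ocurs(n^{-5/4})$; the two rates follow provided the higher-order terms are negligible.

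That last point is where the real work lies. The coefficients $A_k$ depend on $\nu$ through the recursion $(m+2)A_{m+2}=(m+1)A_m-\tfrac\nu2A_{m-1}$, and one checks by induction (from $A_0=1$, $A_1=0$, $A_2=\tfrac12$) that $A_k$ is a polynomial in $\nu$ of degree $\lfloor k/3\rfloor$. Thus, for $x\in[a,b]$ with $\sqrt{\nu x}\ge1$, the $k$-th term of the $I_n$-series is $\Ocurs(\nu^{\lfloor k/3\rfloor-(k+1)/2-1/4})$ — an exponent $<-3/4$ for every $k\ge1$ — and the $k$-th term of the $Q_n$-series is $\Ocurs(\nu^{\lfloor k/3\rfloor-(k+2)/2-1/4})$, with exponent $<-5/4$ for $k\ge1$; the finitely many indices with $k\gtrsim\sqrt\nu$, for which the Bessel decay is unavailable, contribute only a quantity that is superexponentially small in $\nu$. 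Together with the uniform convergence asserted in Lemma \ref{integrals.lm}, this shows that on $[a,b]$ each series is dominated by its leading term, which yields $c_n=\Ocurs(n^{-3/4})$ and $A_n=\Ocurs(n^{-5/4})$. Finally, the pointwise bound $e^{-x/2}P_n(x)=\Ocurs(n^{-1/4})$ for fixed $x>0$ comes out of the same analysis applied to the first formula of Lemma \ref{integrals.lm}: the leading term is $A_0J_0(\sqrt{\nu x})=J_0(\sqrt{\nu x})=\Ocurs((\nu x)^{-1/4})$, and, since $A_1=0$, the remaining terms are of strictly lower order.
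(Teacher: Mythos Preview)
Your approach is essentially the same as the paper's: both invoke the Bessel-series representations of Lemma~\ref{integrals.lm} together with the large-argument asymptotic $J_m(z)=\Ocurs(z^{-1/2})$ to read off the exponents $-1/4$, $-3/4$, $-5/4$ from the leading ($k=0$) terms of the three series. Your write-up is considerably more detailed than the paper's one-line sketch---in particular you make explicit the Fubini reduction $c_n=\int I_n\,d\mu$, $A_n=\int Q_n\,d\mu$ and the degree count $\deg_\nu A_k\le\lfloor k/3\rfloor$ for the $\nu$-dependent coefficients---but the route is identical.
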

\begin{proof}
This result follows from Lemma \ref{integrals.lm}, using the
asymptotic expansion for Bessel functions
$$
J_n(x) = (\frac{1}{2}\pi x)^{-1/2} \cos\left(x-\frac{\pi}{2}n -
  \frac{\pi}{4}\right) + \Ocurs(x^{-3/2}),
$$
which holds uniformly \cite{erdelyi.al.53} on bounded domains outside
a neighborhood of zero.
\end{proof}

\end{document}